\journalname{Procedia Computer Science}
\newcommand{\pref}[1]{\prettyref{#1}}
\newcommand{\savehyperref}[2]{\texorpdfstring{\hyperref[#1]{#2}}{#2}}
\DeclareMathOperator*{\argmin}{argmin} 
\DeclareMathOperator*{\argmax}{argmax} 
\newcommand{\calN}{\mathcal{N}}
\newcommand{\calM}{\mathcal{M}}
\newcommand{\calW}{\mathcal{W}}
\newcommand{\R}{\mathbb{R}}
\newcommand{\U}{{\mathcal{U}}}
\newcommand{\wtilde}{\widetilde}
\newcommand{\one}{\mathbbm{1}}
\newcommand{\blue}[1]{{\color{black}#1}}
\newtheorem*{theorem*}{Theorem}
\newtheorem{assumption}{Assumption}[section]
\newtheorem*{lemma*}{Lemma}
\newtheorem{definition}{Definition}[section]
\theoremstyle{definition}
\newcommand*{\circled}[1]{\lower.7ex\hbox{\tikz\draw (0pt, 0pt)%
    circle (.5em) node {\makebox[1em][c]{\small #1}};}}
\newtheorem{thm}{Theorem}[section]
\newtheorem{lem}[thm]{Lemma}
\theoremstyle{definition}
\begin{document}

\begin{frontmatter}



\renewcommand{\thefootnote}{\fnsymbol{footnote}}

\dochead{}

\title{No Free Lunch Theorem for Privacy-Preserving LLM Inference}






\author[label1]{Xiaojin Zhang
\footnote{National Engineering Research Center for Big Data Technology and System, Services Computing Technology and System Lab, Cluster and Grid Computing Lab, School of Computer Science and Technology, Huazhong University of Science and Technology, Wuhan, 430074, China, email: xiaojinzhang@hust.edu.cn}}
\address[label1]{Huazhong University of Science and Technology, China}
\author[label1]{Yahao Pang}
\author[label2]{Yan Kang}
\address[label2]{WeBank, China}
\author[label1]{Wei Chen}
\author[label2]{Lixin Fan}
\author[label1]{Hai Jin}
\author[label2,label3]{Qiang Yang\footnote{Corresponding Author, email: qyang@cse.ust.hk}}
\address[label3]{Hong Kong University of Science and Technology, China}

\begin{abstract}
Individuals and businesses have been significantly benefited by Large Language Models (LLMs) including PaLM, Gemini and ChatGPT in various ways. For example, LLMs enhance productivity, reduce costs, and enable us to focus on more valuable tasks. Furthermore, LLMs possess the capacity to sift through extensive datasets, uncover underlying patterns, and furnish critical insights that propel the frontiers of technology and science. However, LLMs also pose privacy concerns. Users' interactions with LLMs may expose their sensitive personal or company information. A lack of robust privacy safeguards and legal frameworks could permit the unwarranted intrusion or improper handling of individual data, thereby risking infringements of privacy and the theft of personal identities. To ensure privacy, it is essential to minimize the dependency between shared prompts and private information. Various randomization approaches have been proposed to protect prompts' privacy, but they may incur utility loss compared to unprotected LLMs prompting. Therefore, it is essential to evaluate the balance between the risk of privacy leakage and loss of utility when conducting effective protection mechanisms. The current study develops a framework for inferring privacy-protected Large Language Models (LLMs) and lays down a solid theoretical basis for examining the interplay between privacy preservation and utility. The core insight is encapsulated within a theorem that is called as the NFL (abbreviation of the word No-Free-Lunch) Theorem.
\end{abstract}

\begin{keyword}
Privacy, LLM Inference, No Free Lunch Theory
\end{keyword}

\end{frontmatter}





\section{Introduction}

The advent of sophisticated Large Language Models, including 
PaLM~\cite{chowdhery2022palm} and ChatGPT~\cite{Chatgpt} have brought substantial benefits to both individuals and enterprisess. These models are equipped to facilitate our endeavors across a diverse spectrum of domains, from synthesizing information to generating new content and data analysis. By doing so, they enhance our productivity, reduce costs, and free us from tedious work, allowing us to focus on more valuable tasks \cite{zhang2023toward}. Moreover, LLMs can assist in generating ideas, designing solutions, and facilitating research and development. For instance, in domains such as healthcare, finance, and science, LLMs can analyze massive volumes of data, identify patterns, and offer valuable insights that can propel technical and scientific breakthroughs and advancements.

\begin{figure}[thpb]
  \centering
  \includegraphics[width=0.7\linewidth]{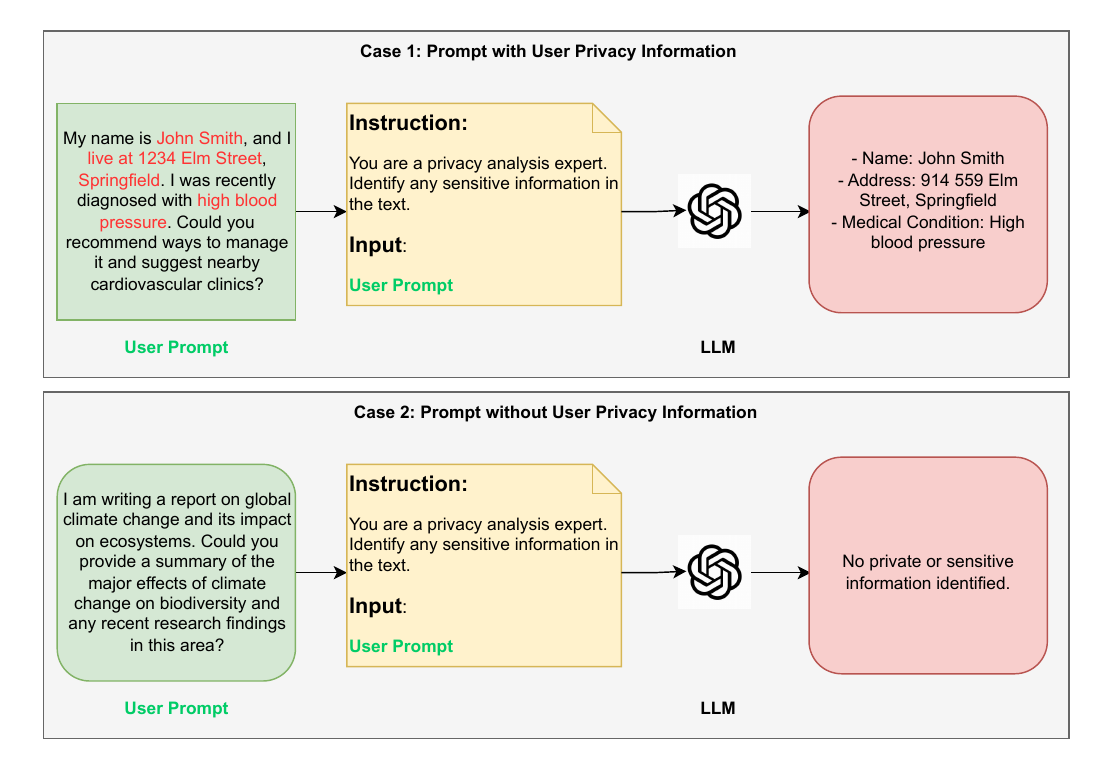}
  \caption{ \textbf{Privacy Leakage in Prompts.}
    In Case 1, the prompt contains sensitive user information, including: name (John Smith), address (1234 Elm Street, Springfield), and medical condition (high blood pressure). This enables the LLM to identify and extract private data, potentially leading to privacy leakage. Such information risks exposing the user’s identity, location, and health details, which could be misused for identity theft, targeted advertising, or discrimination.
    In contrast, the prompt in Case 2 excludes sensitive information, significantly reducing the risk of privacy leakage. This comparison highlights how sensitive information in user prompts can be easily identified by LLMs.
} 
  \label{prompt has privacy}
\end{figure}

While LLMs offer significant benefits, they also raise important privacy concerns. User prompts, which are central to interactions with LLMs, often include sensitive information such as personal identity, background details, and other confidential data, as illustrated in Figure \ref{prompt has privacy}. 
When users do not employ privacy protection measures, any prompt containing sensitive information can be easily recognized and extracted by the server. Even with certain privacy protection techniques in place, servers may still deploy various attack methods—such as embedding inversion or model inference—to infer sensitive user data. These threats have been substantiated in numerous studies (see Section \ref{sec:PAL}). However, overly stringent privacy protections can compromise the utility of LLMs, while insufficient protection leaves user privacy vulnerable. Therefore, our research aims to strike a balance by ensuring privacy protection for user prompts during the LLM inference phase, with minimal impact on model utility.

Therefore, preserving privacy is crucial when prompting LLMs for inference. In this context, a natural approach to safeguarding the privacy of prompts is to introduce randomization. 

This study focuses on a typical LLM inference setting where the LLM operates as a black box, concealing its architecture, model parameters, and inference details within commercial APIs and user interfaces. To address this issue, several randomization techniques~\cite{zhou2023textobfuscator,tong2023infer,li2023privacy} have been proposed, demonstrating empirical effectiveness in mitigating privacy leakage resulting from exposed prompts. However, these approaches inevitably incur a certain degree of utility loss compared to prompting LMs without any protection. Moreover, existing research lacks the theoretical analysis necessary to quantify the compromise of the reduction in utility and exposure of privacy. The tension between reducing the exposure of private information and the detriment to utility propels our exploration to investigate the research inquiry: "\textit{Theoretically, is there possibility of developing some protective methods that can achieve minimal exposure of private information and detriment to utility at the same time when prompting an LLM?}" Our primary finding is encapsulated in the Theorem \ref{thm: utility-privacy trade-off_mt} within context of our proposed framework for privacy-preserving LM inference, presenting a counterargument to this question. We named it as No-Free-Lunch Theorem.

Another pertinent research~\cite{zhang2022no} similarly introduces a theorem that try to address the equilibrium between the utility and privacy. However, the scope of that study primarily focuses on training models in the context of horizontal federated learning, where privacy leakage is defined based on model gradients transmitted between clients and the server. In contrast, our work concentrates on LLM inference, where interactions involving prompts and LLM outputs occur between the client and server as detailed in section \ref{sec:llm_framework}, and we define privacy leakage as the extent to which an adversary can deduce the original input from an protected one. These distinct definitions of privacy leakage provide novel theoretical insights for analyzing the balance between utility and private information. Our study's contributions are encapsulated in the following points:

\begin{itemize}
    \item A framework aimed at privacy-preserving LLM inference, wherein we provide formal definitions for privacy leakage (as outlined in Definition \ref{defi: privacy_leakage}) and utility loss (as outlined in Definition \ref{defi: utility_loss}), has been proposed. Furthermore, we formulate the privacy-preserving LLM inference problem as a constrained optimization problem.
    \item A No-Free-Lunch theorem (as stated in Theorem \ref{thm: utility-privacy trade-off_mt}) aimed at privacy-preserving LLM inference, which provides a quantitative characterization of the trade-off between detriment to utility and exposure of private information, has been established. Specifically, it demonstrates that weighted summation of utility reduction and privacy exposure is bounded below by a constant contingent upon the specific problem, and is not zero, thereby implying that a certain degree of model performance must be sacrificed in order to ensure a desired level of privacy preservation.
\end{itemize}

\section{Related Work}


During the LLM inference, both the client and the LLM server have the potential to act as adversaries, compromising each other's privacy~\cite{kang2023grounding, zhu2012discovering}. In this study, we focus on a scenario where the LLM server is the adversary and can infer the client's private information by analyzing the prompt provided by the client (see Section \ref{sec:tm}). In this section, we provide a concise overview of existing research concerning attack of privacy, safeguarding measures, and the balance of preserving utility against protecting privacy within the realm of large language model (LLM) inference.

\subsection{Privacy Attacks in LLM inference}\label{sec:PAL}

During the inference phase of LLM, the LLM server may attempt to infer a client's private information by analyzing the prompts sent by the client. This can be achieved by inferring privacy information from a single prompt or by manipulating an otherwise innocuous conversation with the user to elicit prompts that contain private and sensitive information~\cite{staab2023beyond}. A multitude of strategies for launching attacks have been outlined in such contexts. For example, the server has the capability to initiate attacks that invert embeddings~\cite{qu2021natural, morris2023text, li2023sentence, kugler2021invbert}, aiming to reconstruct the original prompt based on the provided embedding. In addition, the server may employ these kinds of attacks for extracting sensitive information such as ethnicity, sex, and age from the provided embedding~\cite{Song-2020-Information, li-etal-2022-dont, song2019overlearning, hayet2022invernet, yue2021differential}. Moreover, the server hosting the LLM can exploit its model to discern the client's original prompt from a modified version~\cite{tong2023infer, staab2023beyond}.

\subsection{Privacy Protections in LLM inference}

Secure Multi-Party Computation (SMPC) and Randomization are two mainstream protection mechanisms leveraged to protect the privacy of clients' sensitive information. SMPC facilitates the collaborative computation of a function among several entities, ensuring the privacy of their individual inputs to be confidential. In the context of LLMs, SMPC could be employed for executing computations on encrypted data, which ensures that no single party gains access to other parties' confidential inputs. These methods prioritize enhancing the efficiency of LLM architectures and SMPC protocols to minimize the substantial computation and communication expenses incurred by SMPC protocols~\cite{li2023mpcformer, dong2023puma,liu2023llms,zheng2023primer,hou2023ciphergpt, hao2022iron, ding2023east}. Although SMPC can ensure the confidentiality of client data, it requires the LLM to collaborate tightly with the client, which can limit its application to LLMs that are only accessible through commercial APIs. Randomization is an alternative privacy-preserving mechanism that adds random noise to the prompt's embedding to protect the privacy of the prompt. For instance, the InferDPT approach~\cite{tong2023infer} utilizes a differential privacy (DP) mechanism to alter the user's input text, thereby hindering any potential eavesdropping by a malevolent large language model (LLM) server that could deduce sensitive user information. In the context of the InferDPT system, users initially introduce minor, yet semantically coherent, variations to the tokens within their input. Afterward, this modified input is forwarded to an LLM, which formulates a reply and relays it back to the user. The user then utilizes their own pre-trained model to produce the conclusive output by integrating the initial input and the LLM's reply. The DP-OPT technique~\cite{hong2023dpopt}, on the other hand, harnesses advanced Deep Language Networks (as described in DLN~\cite{sordoni2023dln}) under the guidance of a local model to autonomously refine input prompts. In this process, DP-OPT implements a privacy-preserving aggregation technique to safeguard the confidentiality of the prompts. In a separate study, staab et al.~\cite{staab2023beyond} explored the application of text anonymization techniques for safeguarding user data privacy and concluded through empirical evidence that such methods alone are not adequate for ensuring robust privacy protection.

\subsection{Trade-off Between Utility and  Privacy in LLM inference}

In addition, beyond introducing methods to safeguard personal data, several investigations have scrutinized the equilibrium between utility and private information when interacting with LLMs. These investigations primarily focus on two scenarios for LLM inference. 

In the first scenario, the LLM operates behind a commercial API, solely relying on textual prompts as inputs. Protection mechanisms typically involve introducing random noise to the prompts or their corresponding embeddings. Representative works in this scenario include DP-OPT~\cite{hong2023dpopt} and InferDPT~\cite{tong2023infer}. These studies have demonstrated a decline in task performance as the privacy budget decreases. Furthermore, DP-OPT highlights that leveraging larger LLMs can significantly mitigate performance-private information trade-off. LLM is divided into two parts in the second scenario: a larger portion deployed on the server and a smaller portion deployed on the client. The server and client collaborate in training these two portions of the LLM, incorporating random noise into the embedding vectors to protect privacy. Representative examples of research falling under this scenario include TextObfuscator~\cite{zhou2023textobfuscator} and SAP~\cite{shen2023sap}. Both studies also exhibited the delicate balance between privacy and utility. Specifically, as randomization increases, utility decreases while privacy increases, and vice versa. 

The primary emphasis of these studies lies in the empirical assessment of the balance between confidentiality and performance, lacking a theoretical framework or a numerical evaluation of the utility-privacy equilibrium during LLMs' inference. Our study aims to address this gap by providing a theoretical analysis and quantification in this regard.

\begin{table*}[!htp]
\footnotesize
  \centering
  \setlength{\belowcaptionskip}{15pt}
  \caption{Used notation in this study}
  \bgroup
  \def\arraystretch{1.35}
  \setlength{\tabcolsep}{1.8pt}
  \label{table: notation}
    \begin{tabular}{cc}
    \toprule
    Notation & Meaning\cr
    \midrule\
    $\wtilde{d}^{(m)}$ & The $m$-th token in client's protected prompt $\wtilde{d}$ \cr
    $\epsilon_{p}$ & Leakage of private information (Def. \ref{defi: privacy_leakage})\cr
    $d^{(m)}$ & The $m$-th token in client's prompt $d$ \cr
    $\epsilon_u$ & Loss of utility (Def. \ref{defi: utility_loss}) \cr
    $d$, $\wtilde{d}$ & The client's original and protected prompts, respectively\cr
    $w$ & Undistorted embedding of the client's prompt\cr
    $\wtilde{w}$ & Distorted embedding of the client's prompt\cr
 $P$ & Distribution of undistorted embedding $w$\cr
 $\wtilde P$ & Distribution of distorted embedding $\wtilde{w}$\cr
$\breve P$ & Distribution of embedding that is independent of the embedding of the client's prompt\cr
  $P_0$ & Distribution of test data\cr
 $\text{TV}(\cdot||\cdot)$ & Two distributions' total variation distance\cr
    \bottomrule
    \end{tabular}
 \egroup
\end{table*}

\section{A Framework of Privacy-Preserving LLM Inference}\label{sec:llm_framework}

Our framework considers a typical LLM inference setting in which a client sends prompts to query the \textit{black-box} LLM hosted by a server. The LLM is black-box in the sense that the server of the LLM hides the LLM architecture and parameters as well as inference details, and it only exposes the query and prediction commercial APIs and interface for the client to make inferences. We also assume that the server may mount a privacy attack during the inference to infer client's privacy based on observed prompts, which necessitates client-side privacy protection. 

We commence with an introduction of threat model in this section. Subsequently, we expound upon representative protection mechanisms and attacking methods, based on which we subsequently provide formal definitions of utility loss and privacy leakage. Finally, we formulate objective of privacy-preserving LLM inference as a constrained optimization problem.


\subsection{Threat Model}\label{sec:tm}

We assume the LLM server is the attacker. We discuss its threat to the client's private data considering objective, capability, and knowledge of attacker. 


\noindent\textbf{Attacker's objective}. We regard the LLM server as a potential adversary intent on deducing the client's confidential details with considerable accuracy by analyzing the client's exposed embedding. To this end, the server tries to obtain tokens or words in the prompt as many as possible.


\noindent\textbf{Attacker's capability}. We classify the adversary as {semi-honest}, adhering to the LLM inference rules by ensuring the production of the output, yet potentially seeking to deduce sensitive client data based on client's prompt.

\noindent\textbf{Attacker's knowledge}. 
We assume that the server is aware that the client may apply a randomization mechanism to protect its uploaded prompts. The server may launch attacks using all available information (e.g., its hosted LLM) to reconstruct each word or token from the original prompt based on the observed perturbed embedding.

Additionally, while our primary focus is on scenarios where the LLM server acts as the adversary, we also consider the potential risks posed by malicious clients. Malicious clients may attempt to infer other clients’ private data by exploiting shared inference results, model updates, or interactions within collaborative learning frameworks. Our framework is designed to mitigate such risks by implementing a robust isolation mechanism, where each client’s prompt undergoes random perturbation before being processed by the LLM. This approach ensures that the LLM server, acting as an intermediary, does not directly access unprotected client data, thereby preventing malicious clients from inferring sensitive information about other users. As a result, our framework is adaptable to scenarios involving both a potentially adversarial server and malicious clients, ensuring comprehensive protection of client privacy.

\subsection{Attacking Methods}\label{sec:attack}

Considering the attacker's knowledge and capability, it is expected that they will employ a privacy breach technique, designated as Attack $\mathcal{A}$, to deduce the sequence of the original prompt's elements from the observed prompt. Herein, we give three representative attacks.
\begin{itemize}
    \item Input inference attack~\cite{yue2021differential}: the server leverages a pre-trained BERT model to reconstruct tokens of the original prompt from the perturbed prompt. Specifically, the server substitutes each token in the perturbed prompt sequentially with the special one "[MASK]" and lets the BERT model predict the token at the "[MASK]" position. The accuracy of the attack is computed by comparing the predicted tokens to the original tokens. 

 \item Embedding inversion attack~\cite{qu2021natural}: the server deduces the original tokens by referencing the embeddings of the protected tokens. It utilizes an algorithm for nearest neighbor identification to accomplish this objective. Upon receiving a embedding of an altered token, the server identifies its closest counterpart within the vector space, which is then considered the inferred original token. The effectiveness of this process is gauged by the precision of the recovered tokens.

 \item LLM-assisted inference attack~\cite{tong2023infer,staab2023beyond}: the server exploits the capitalizes of large language models like GPT-4 to reconstruct the original prompt from the perturbed version. To elaborate, the server introduces the perturbed prompt to the GPT-4 system and commands it to regenerate each token. The operation is deemed a success when the regenerated tokens align precisely with those of the original prompt.
\end{itemize}

In this study, we investigate the trade-off between loss of utility and leakage of privacy when client introduces randomization to the prompt used to query a black-box LLM for protecting the prompt's confidentiality. In the next section, we elaborate on the randomization protection mechanism and the associated privacy-preserving LLM inference process.


\subsection{Protection Mechanisms}\label{sec:protect}



The  user employs \textit{privacy preserving mechanism} $\calM: \R^{m}\rightarrow \R^{m}$ that convert the embedding $w$ of original prompt $d$ to a distorted counterpart $\wtilde w$. $w$ and $\wtilde w$ follow distributions $P$ and $\wtilde P$, respectively. The client then converts the distorted embeddings $\wtilde w$ back to the protected prompt $\wtilde d$, which will be shared with the server for LLM inference. In this way, the chance that the attacker can infer the original prompt $d$ based on $\wtilde w$ is reduced. The distribution $\wtilde P$ is termed \textit{protected distribution} pertaining to client's prompt. 

Let us consider the process of adding randomness as an illustrative case. We posit that $w$ follows the distribution $P$, which is a normal distribution with mean $\mu_0$ and covariance matrix $\Sigma_0$, and the perturbative noise $\epsilon$ is also normally distributed with a zero mean and noise covariance $\Sigma_{\epsilon}$. Here, $\Sigma_0$ is specified as a diagonal matrix with elements $\{\sigma_{1}^2, \ldots, \sigma_{m}^2\}$, and $\Sigma_\epsilon$ is similarly a diagonal matrix with identical elements $\{\sigma_\epsilon^2, \ldots, \sigma_\epsilon^2\}$. Subsequently, original prompt is safeguarded through the addition of noise to its embedding: $\wtilde w = w + \epsilon\sim\calN(\mu_0, \Sigma_0+ \Sigma_\epsilon)$ with distribution $\wtilde P = \calN(\mu_0, \Sigma_0+ \Sigma_\epsilon)$. The randomization mechanisms can defend against a broad range of privacy attacks, for it directly replaces original tokens by sampling noise.

With the protection mechanism, the LLM inference procedure is illustrated in Figure \ref{fig:llm_infer_framework} and described as follows:
\begin{enumerate}[label=\circled{\arabic*}]
\item The client designs a prompt $d$ to accomplish a specific task.
\item The client leverages a privacy protection mechanism $\calM$ to transform $d$ to protected version $\wtilde{d}$, aiming to prevent the server (i.e., the attacker) from inferring private information through investigating $\wtilde{d}$. 
\item The client sends the prompt of $\wtilde{d}$ to the server to query the LLM.
\item The LLM at the server takes the protected prompt $\wtilde{d}$ as input and generates the corresponding response $\wtilde{r}$. Then, the server sends $\wtilde{r}$ to the client.
\end{enumerate}

\begin{figure*}[t!]
    \centering
    \includegraphics[scale=0.25]{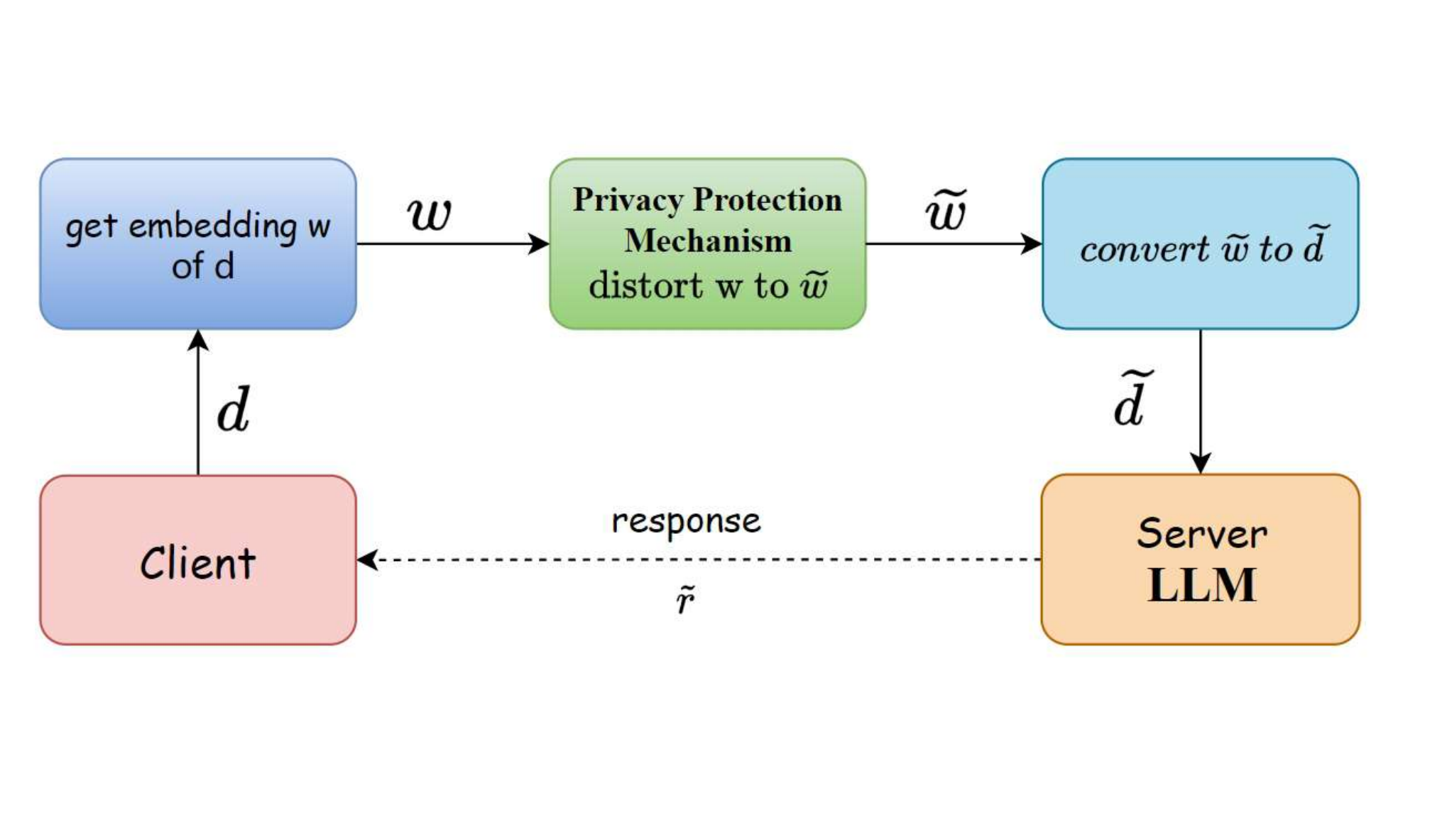}
    \caption{Illustration of the privacy-preserving LLM inference. The LLM inference procedure involves four steps: (1) the client designs a prompt $d$; (2) the client applies a privacy protection mechanism to the original prompt $d$ to obtain a protected prompt $\wtilde{d}$; (3) the client sends $\wtilde{d}$ the server; (4) The LLM takes $\wtilde{d}$ as input and sends the generated response $\wtilde{r}$ back to the client.}
    \label{fig:llm_infer_framework}
\end{figure*}

In our study, black-box LLM is mainly considered by us, whose architecture, model parameters, and inference details are hidden behind commercial APIs and user interfaces. Therefore, protection mechanisms involving encryption and secure multi-party computation do not apply to our framework. To protect the privacy of prompts querying a black-box LLM, randomization is the representative privacy protection mechanism investigated in the literature. In the next subsection, we elucidate the randomization protection mechanism.


\subsubsection{Randomization Protection Mechanism}

Given a tokenizer, a token vocabulary $V$, and $E \in$  $\mathbb{R}^{|V|\times M}$, an embedding model, where $|V|$ is vocabulary's size and $M$ is embedding's dimension, respectively, we first use tokenizer to turn prompt $d$ into tokens $\{d^{(m)}\}_{m=1}^{|V|}$, where $d^{(m)} \in$ $V$, and then we employ $E$ to map a token $d^{(m)}$ into an embedding $w^{(m)} = E(d^{(m)})$. The random distortion of a token $d^{(m)}$ can be done by the addition of a random noise $\delta$ to $w^{(m)}$: $\wtilde{w}^{(m)} = w^{(m)} + \delta$ and replace $d^{(m)}$ with a token $\wtilde{d}^{(m)}$ whose embedding is close to $\wtilde{w}^{(m)}$, the procedure of which is illustrated in Figure \ref{fig:llm_perturbation}.

\begin{figure*}[h!]
    \centering
    \includegraphics[width=0.99 \linewidth]{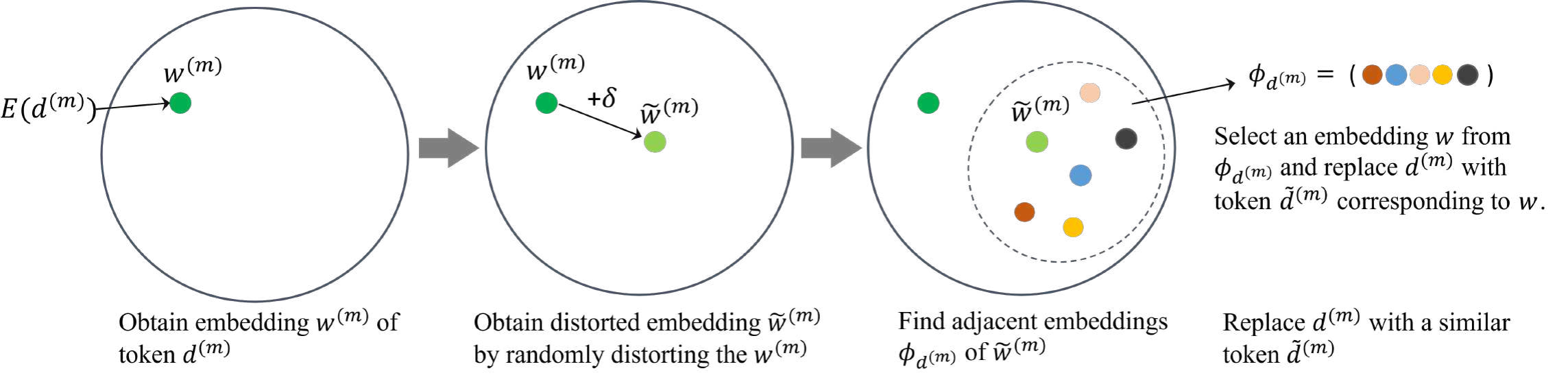}
    \caption{Illustration of randomization protection mechanism. The randomization protection mechanism aims to replace each token in a prompt with a semantically similar token. It typically involves four steps: (1) obtain the embedding $w^{(m)}$ of a given token $d^{(m)}$; (2) add random noise $\delta$ to the embedding $w^{(m)}$ to obtain a distorted embedding $\wtilde{w}^{(m)}$; (3) find adjacent embeddings of $\wtilde{w}^{(m)}$, denoted as $\phi_{d^{(m)}}$; (4) Select an embedding $w$ from $\phi_{d^{(m)}}$ and replace $d^{(m)}$ with token $\wtilde{d}^{(m)}$  corresponding to $w$.}
    \label{fig:llm_perturbation}
\end{figure*}

The literature has proposed various methods to compute $\delta$ and choose a semantic similar token $\wtilde{d}^{(m)}$ for $d^{(m)}$ based on the distorted embedding $\wtilde{w}^{(m)}$ of $w^{(m)}$. Herein, we give two examples:

\begin{itemize}
\item Applying $d_{\chi}$-privacy to perturb token $d^{(m)}$~\cite{li2023privacy}. Given dimension $\pi$ and privacy parameter $\eta$, the random noise $\delta$ is computed by $\delta = lv$,  where the scalar $l \sim$ Gamma distribution $ \Gamma(\pi, \frac{1}{\eta} )$ and the uniform vector $v$ is selected at random in accordance with a uniform distribution across the entire unit ball $\mathbb{B}^\pi$. we first compute its embedding $w^{(m)} = E(d^{(m)})$ and random noise embedding $\wtilde{w}^{(m)} = w^{(m)} + \delta$. A token $d^{(m)}$ is replaced with a semantically similar token $\wtilde{d}^{(m)}$ by solving the optimization problem: $\wtilde{d}^{(m)} = \argmin_{k}||w^{(m)} - \wtilde{w}^{(m)}||$. By repetitively replacing each token $d^{(m)}$ in $d$ with $\wtilde{d}^{(m)}$, we obtain a privatized textual version $\wtilde{d}$ of the original prompt $d$.

\item Applying the random adjacent list to perturb $d^{(m)}$~\cite{tong2023infer}. Given a token $d^{(m)}$, we first compute its embedding $w^{(m)} = E(d^{(m)})$ and random noise embedding $\wtilde{w}^{(m)} = w^{(m)} + \delta$. The random adjacency list of $d^{(m)}$ is composed of any token $d^{(k)} \in V, k \neq m$ whose embedding $w^{(k)}$ has a Euclidean distance to $w^{(m)}$ that is shorter than Euclidean distance between $w^{(m)}$ and $\wtilde{w}^{(m)}$. We denote the random adjacent list of $d^{(m)}$ as $\phi_{d^{(m)}}$. By repetitively replacing each token $d^{(m)}$ of $d$ with a token in $\phi_{d^{(m)}}$, we obtain a privatized textual version $\wtilde{d}$ of $d$.
 
\end{itemize}


We now proceed to provide formal definitions for the fundamental concepts imposed by the privacy-preserving mechanism $\mathcal{M}$ and the attacking mechanism $\mathcal{A}$.

\subsection{Utility Loss and Privacy Leakage}


The original prompt is protected by protection mechanism $\mathcal{M}$ perturbing its embedding, which affects both the loss of utility and leakage of privacy of the prompt (see Section \ref{sec:protect}), and the attacking method $\mathcal{A}$ typically reconstructs tokens of the original prompt based on the observed (i.e., protected) prompt (see Section \ref{sec:attack}), which determines the privacy leakage. In this section, we formally defined privacy leakage and utility loss.

\blue{
\begin{definition}[Privacy Leakage]\label{defi: privacy_leakage}
Let $\wtilde P$ denote protected embedding's distribution of the client's prompt, and $\breve P$ denote the distribution of embedding that is independent of the client's prompt. We define privacy leakage as the measurement that quantifies the gap between the recovery extent of the private information when the distorted prompt is provided and the random prompt is provided to the attacker:
\begin{align}
    \epsilon_p = R(\wtilde P) - R(\breve P) 
\end{align}
where $R(\wtilde P) = \mathbb E_{w\sim \wtilde P} [R(w)]$ and $R(\breve P) = \mathbb E_{w\sim \breve P}[R(w)]$. The $R(w)$ is  the recovery extent $R$ on the embedding $w$ of prompt $d$ and we define it as
\begin{align}\label{eq: defi_privacy_leakage}
   R(w) = 1 - \frac{1}{I}\sum_{i = 1}^{I} \frac{||\frac{1}{|d|}\sum_{m=1}^{|d|} (d_i^{(m)}(w) - d^{(m)})||}{\Omega}.
\end{align}
where $d^{(m)}$ represents prompt $d$'s original $m$-th token, $d_i^{(m)}(w)$ represents the $m$-th token inferred by the attacker at iteration $i$ upon observing the embedding $w$, and $I$ represents the total number of the iterative process of the attacking approach launched by the attacker. 

\noindent\textbf{Remark:}\\
(1) $R(\breve P)$ acts as the baseline scenario where the attacker attempts to recover the original prompt through random guessing. In essence, if the attacker achieves a recovery extent close to $R(\breve P)$, it indicates that the privacy leakage incurred by the attacker is approaching 0.\\
(2) We assume that $||d_i^{(m)} - d^{(m)}||\in [0,\Omega]$, where $\Omega$ is the upper bound of the distance between tokens and it is a positive constant. Therefore, $\epsilon_p\in [0,1]$.\\
(3) When all estimated tokens $d_i^{(m)}$ are equal to the original token $d^{(m)}$ for every index $m$ ranging from 1 to the length of the token sequence $|d|$, it indicates that there is no differential privacy protection applied, and the privacy leakage is at its maximum (quantified as 1).\\
(4) Without loss of generality, we posit privacy parameter $I$ is greater than 0. If $I$ is equal to 0, it implies that no privacy protection is enforced, and the privacy loss for any token $w$ will be 0.
\end{definition}

\begin{assumption}\label{assump: bound_on_R}
    We assume that $R(\wtilde w) - R(w) \ge R(\wtilde w)/c$ $\forall w\in{\mathcal W}$ and  $\wtilde w\in\wtilde {\mathcal W}$, where $c > 0$ represents a constant.
\end{assumption}
}






This following definition captures the discrepancy in utility between using unprotected model information and using model information with privacy-preserving mechanisms, and it serves as a quantitative measure of the impact of privacy protection on the utility of the model inference.

\begin{definition}[Utility Loss]\label{defi: utility_loss}
The decrement in utility reflects the contrast between the utility achieved through the model's information in the absence of privacy-preserving measures, adhering to the original distribution $P$, and the utility achieved when privacy safeguards are implemented, conforming to the secured distribution $\widetilde{P}$. Mathematically, it is defined as:

\begin{align*}
\epsilon_{u} = U(P) - U(\widetilde{P}),
\end{align*}
where $P$ and $\widetilde{P}$ represent the distributions of models without and with a privacy protection mechanism, respectively. The expected utility $U(P) = \mathbb E_{s\sim P_0}\mathbb E_{w \sim P}U(w,s)$ is calculated with respect to a test dataset $s \sim P_0$, where $P_0$ is the distribution of the test dataset. The utility function $U(w, s)$ quantifies the usefulness or effectiveness of the model $w$ on the test dataset $s$.
\end{definition}




\subsection{Privacy-Preserving LLM Inference Optimization}

During inference of privacy-protecting LLM, the client aims to find a prompt protected by a protection mechanism, with the goal of minimizing utility loss given a privacy budget. Simultaneously, the server wants to infer confidential details of the client from protected prompt. We formulate the optimization problem of the privacy-preserving LLM inference as follows. 
\begin{equation}\label{eq:llm_priv_infer}
\begin{split}
    d^{*} &= \argmin\limits_{d} \epsilon_{u}(d | \mathcal{M}) \\
\text{s.t.  } & \epsilon_{p} (d | \mathcal{M}, \mathcal{A})\le\xi.
\end{split}
\end{equation}
where $\mathcal{M}$ is the privacy protection mechanism applied by the client to protect its original prompt $d$, $\mathcal{A}$ is the attacking approach leveraged by the server to reconstruct $d$ based on the observed prompt $\wtilde{d}$ sent by the client, and $\xi$ is the constraint of leakage of privacy; $\epsilon_u$ is utility loss of $d$, affected by the $\mathcal{M}$; $\epsilon_p$ is the privacy leakage on $d$, which is affected by both $\mathcal{M}$ and $\mathcal{A}$. 

The LLM inference problem formulated in Eq.(\ref{eq:llm_priv_infer}) aims to find a prompt $d^*$ that achieves the minimal utility loss while keeping the privacy leakage under an acceptable constraint $\xi$. Within the privacy-preserving LLM inference's framework, the problem to achieve minimal privacy leakage and utility loss simultaneously raises a fundamental question: Are there any possibility of designing protective mechanism to fulfill both objectives? In the forthcoming section, we address this question by presenting a counterargument in specific scenarios. We are thus motivated to introduce the notion of no free lunch in the realm of privacy-protected large language model inference, as detailed in Theorem \ref{thm: utility-privacy trade-off_mt}. The theorem establishes the inherent equilibrium between private information and utility in LLM inference, highlighting challenges in simultaneously minimizing utility loss and privacy leakage. By examining the trade-off, valuable insights into the limitations and complexities associated with achieving optimal privacy and utility guarantees are derived in LLM inference scenarios.



\section{No Free Lunch Theorem for Privacy-Preserving LLM Inference}



The fundamental concept behind privacy-preserving LLM inference revolves around perturbing the embedding of the original prompt $d$ and transmitting the perturbed embedding $\widetilde{w}$ as the prompt $\widetilde{d}$ to the server, ensuring the privacy of $d$. In this process, the utility loss is employed to quantify the increase in the expected average loss caused by the perturbed embedding $\widetilde{w}$ compared to the utility provided by $w$. It is worth noting that $\widetilde{w}$ and $w$ follow the distributions $\widetilde{P}$ and $P$ respectively. By considering these factors, we delve into the No Free Lunch Theorem for Privacy-Preserving LLM Inference, which is the focus of this article. This theorem sheds light on the inherent equilibrium between utility and private information in LLM inference, emphasizing challenges associated with simultaneously minimizing utility loss and privacy leakage. By exploring this trade-off, we gain a deeper understanding of the intricacy of securing optimal standard of privacy with compromising least utility of LLM inference mechanisms.


In this context, our objective is to investigate the limitations of randomization-based privacy-preserving mechanisms. To accomplish our goal, it is essential to measure the unavoidable decrement in utility inherent in the process of safeguarding privacy. Logically, as we apply greater changes to the original prompt $w$, we enhance the level of privacy preservation, but at the same time, we compromise the accuracy of the resulting output. To assess the degree of distortion, we employ Euclidean distance of tokens of the original prompt and protected prompt's tokens. This distance serves as a crucial component that connects the notions of  utility loss and privacy leakage. By incorporating this distance metric, we establish a framework for understanding the equilibrium of utility loss and privacy preservation in context of randomization-based privacy-preserving LLM inference. This framework forms the basis for the formulation and exploration of the No Free Lunch Theorem for Privacy-Preserving LLM Inference, which is the central focus of our article.




\begin{definition}[Optimal Embedding]
Let $w^{*}$ denote the embedding of the prompt $d^{*}$ that maximizes the utility. Specifically, 
\begin{align*}
    w^{*} = \argmax_{w\in\mathcal W} U(w),
\end{align*}
where $U(w) = \mathbb{E}_{s\sim P_o}U(w,s)$ is the expected utility computed on a test dataset $s$ sampled from distribution $P_0$, and $\mathcal{W}$ is the union of the support $\wtilde P$ and the support of $P$.
\end{definition}

Subsequently, we define the notion of near-optimal embedding, denoted by $\mathcal{W}_c$. Here, $\wtilde{\calW}$ represents the set of possible protected embeddings, and $c$ is a non-negative constant. The near-optimal prompt embedding consists of those embeddings from $\wtilde{\calW}$ for which the difference in utility, denoted by $U(w^*)- U(w)$, is no greater than $c$ for all possible optimal embeddings $w^*$ in the set $\mathcal{W}^*$. In other words, the near-optimal embedding captures the embeddings that result in a utility loss within the tolerance specified by $c$ when compared to all possible optimal embeddings.
\begin{definition}[Near-optimal Embedding]\label{defi: neighbor_set}
Suppose $\wtilde{\calW}$ is the support of the protected embedding's distribution. Let $c$ be a constant which is no smaller than 0, we define the \textit{near-optimal prompt embedding} $\mathcal{W}_c$ as the subset of $\wtilde{\calW}$ that satisfies the following condition: for any optimal embedding $w^{*}$ in the set $\mathcal{W}^{*}$, the difference in utility between $w^{*}$ and any embedding $w$ in $\mathcal{W}_c$ is no greater than $c$, i.e.,
$$\calW_{c} = \left\{w\in\wtilde{\calW}: \left| U(w^{*})- U(w)\right|\le c, \forall w^{*}\in\mathcal W^{*}\right\}.$$
\end{definition}

In other words, near-optimal prompt embedding $\mathcal{W}_c$ consists of the embeddings from $\widetilde{\mathcal{W}}$ that exhibit a utility difference of at most $c$ compared to any optimal embedding $w^{*}$ in $\mathcal{W}^{*}$. This definition allows us to identify a set of embeddings that are close in utility to the optimal embeddings while providing a level of privacy protection.

Assumption \ref{assump: assump_of_Delta} ensures that there is a limit on the density of embeddings within a certain distance from the optimal embeddings. This prevents situations where the utility function lacks variability and fails to distinguish between optimal embeddings and a subset of embeddings.
\begin{assumption}\label{assump: assump_of_Delta}
Let $c$ in $\mathcal{W}_c$ (see Definition \ref{defi: neighbor_set}) be $\alpha$, the \textit{maximum} constant s.t.
\begin{align}
     \int_{\wtilde{\mathcal{W}}} \wtilde{p}(w)\one\{w\in\calW_{\alpha}\} dw\le\frac{{\text{TV}}(P || \wtilde{P} )}{2},
\end{align}
where $\widetilde{p}$ is distorted embedding $\widetilde{w}$'s probability density, $\alpha$ is a positive constant.
\end{assumption}
\noindent\textbf{Remark:}\\

In the above assumption, we state that $\alpha$ is the maximum constant that satisfies the inequality involving total variation distance between the true distribution $P$ and the distorted distribution $\widetilde{P}$. Such inequality gives a bound to near-optimal parameters' cumulative density, as delineated in Definition \ref{defi: neighbor_set}. We assume that $\alpha$ is positive, indicating that there exists a non-zero tolerance for the utility loss.\\
\begin{definition}[Distortion Extent]\label{defi: distortion_extent}
    Let $d^{(m)}$ and $\wtilde{d}^{(m)}$ be original prompt $d$'s and protected prompt $\wtilde{d}$'s $m$-th token, respectively. Then, the distortion extent is defined as 
    \begin{align}
        \Delta =  \|\frac{1}{N}\sum_{m = 1}^{N} g(d^{(m)}) - \frac{1}{N}\sum_{m = 1}^{N} g(\wtilde{d}^{(m)})\|,
    \end{align}
where $N = |d|$ and $||\cdot||$ is the Euclidean distance.\\
\textbf{Remark:} Assuming general applicability, we posit that $\Delta\le 1$.
\end{definition}


\begin{assumption}[Bi-Lipschitz Condition]\label{assump: two_sided_Lipschitz}
For any two prompts $d_1$ and $d_2$, it is assumed that there exist constants $c_a$ and $c_b$ s.t.:
\begin{align*}
c_a \cdot ||g(d_1) - g(d_2)|| \le ||d_1 - d_2|| \le c_b \cdot ||g(d_1) - g(d_2)||,
\end{align*}
where $g(d)$ represents the encoding of prompt $d$, and $||\cdot||$ denotes a norm (e.g., Euclidean norm).
\end{assumption}

The above assumption establishes a bi-Lipschitz condition between the distance in the prompt space and the distance in the encoded space. It implies that the encoding function $g(\cdot)$ preserves the relative distances between prompts up to a scaling factor within a certain range determined by the constants $c_a$ and $c_b$.


Based on the aforementioned context, we introduce Assumption \ref{assump: bounds_for_optimization_alg}, which assumes that the cumulative regret in the privacy-preserving LLM inference process is self-bounded. Specifically, a polynomial function of number of learning rounds $I$ bounds cumulative regret. The bound is defined as $\Theta(I^{p})$, where $p$ is a positive exponent. Difference between utility function evaluations of reconstructed data points $d_i^{(m)}$ and original data points $d^{(m)}$ is measured as regret. The constants $c_0$ and $c_2$ represent positive values that determine the bounds of the regret.


\begin{assumption}[Self-bounded Regret]\label{assump: bounds_for_optimization_alg}
Let $I$ represent the overall count of learning iterations conducted by the semi-honest adversary. We suppose cumulative regret satisfies the following condition:
\begin{align}
c_0\cdot I^{p} \le \sum_{i = 1}^{I}||g(d_i^{(m)}) - g(d^{(m)})|| \triangleq \Theta(I^{p}) \le c_2\cdot I^{p},
\end{align}
where $c_0$ and $c_2$ are positive constants, $d^{(m)}$ represents the $m$-th data point in the dataset $d$, and $d_i^{(m)}$ denotes the $m$-th recovered data point generated at iteration $i$ through the application of the optimization algorithm.
\end{assumption}

This assumption states that the cumulative regret is bounded by a polynomial function of $I$. The polynomial function is characterized by the exponent $p$, and the bounds are determined by the positive constants $c_0$ and $c_2$. The assumption implies that the attacker achieves regret that scales polynomially with $I$, indicating that the attacker exploits an asymptotically optimal regret for gradient-matching.

The provided assumption establishes a performance guarantee for the optimization algorithm used by the attacker and suggests that many classical gradient-matching optimizers satisfy this self-bounded regret condition.



\noindent\textbf{Remark:} In general, Assumption \ref{assump: bounds_for_optimization_alg} indicates that the attacker will exploit asymptotically optimal regret for gradient-matching. This assumption is reasonable in practice as many classical gradient-matching optimizers satisfy it. The following examples illustrate this point:

\noindent \textbf{Example 1:} The AdaGrad algorithm \cite{duchi2011adaptive} achieves an optimal regret bound of $\Theta(\sqrt{I})$ given by $O(\max{\log h, h^{1-\beta/2}\sqrt{I}})$, where $\beta\in (1,2)$ and $h$ is the dimension of the data. In this case, $p = 1/2$.

\noindent \textbf{Example 2:} The Adam algorithm \cite{kingma2014adam} achieves an optimal regret bound of $\Theta(\sqrt{I})$ given by $O(\log h\sqrt{I})$ with an improved constant, where $h$ is the dimension of the data. In this case, $p = 1/2$.

These examples demonstrate that well-known optimization algorithms like AdaGrad and Adam satisfy Assumption \ref{assump: bounds_for_optimization_alg} and achieve optimal regret bounds with specific values of $p$. This supports the practicality and applicability of the assumption in various scenarios.

\blue{
In the following Lemma \ref{lem: bound_for_privacy_leakage_mt_0}, we establish a theoretical relationship between privacy leakage and the extent of distortion in the embedding process. Assuming the Lipschitz continuity property and certain bounds on the optimization algorithm, and that a function of the number of rounds bounds expected regret of attacker's optimization algorithm. We propose a lower bound on the privacy leakage, indicating that the reconstruction of the original prompt from the protected prompt is limited. This result offers significant understanding of the trade-off between privacy and distortion in presence of a semi-honest attacker.

\begin{lem}[Theoretical Relationship between Recovery Extent and Distortion Extent]\label{lem: bound_for_privacy_leakage_mt_0}
Let \pref{assump: two_sided_Lipschitz} hold. The semi-honest attacker employs an optimization algorithm to conduct attack in order to infer the original prompt $d$ belonging to client from $\wtilde{d}$. Suppose embedding of $d$ is $w$. Let $d^{(m)}$ be $m$-th token of $d$, $\wtilde{d}^{(m)}$ represent $m$-th token of $\wtilde d$. Let the distortion extent of the embedding be introduced in \pref{defi: distortion_extent}. Let \pref{assump: bounds_for_optimization_alg} hold. That is, total $I$ rounds of the optimization algorithm is expected regret is $\Theta(I^p)$. We have that
\begin{align}
    R(w) \ge 1 - \frac{c_b\Delta + c_2\cdot c_b I^{p-1}}{\Omega},
\end{align}
where $c_2$ is introduced in \pref{assump: bounds_for_optimization_alg}, $c_b$ is introduced in \pref{assump: two_sided_Lipschitz}, and $\Omega$ is introduced in \pref{defi: privacy_leakage}.
\end{lem}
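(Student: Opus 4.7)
The plan is to upper bound the averaged reconstruction error
$$E \;:=\; \frac{1}{I}\sum_{i=1}^{I} \Bigl\|\frac{1}{N}\sum_{m=1}^{N}\bigl(d_i^{(m)} - d^{(m)}\bigr)\Bigr\|,$$
where $N = |d|$, by $c_b \Delta + c_b c_2 I^{p-1}$, so that the claim follows directly from $R(w) = 1 - E/\Omega$. The central idea is to insert the protected token $\wtilde d^{(m)}$ inside each reconstruction error and use the triangle inequality to split $E$ into an attacker-regret part (how close the $i$-th iterate is to what the attacker actually observes) and a distortion part (how far the observed $\wtilde d^{(m)}$ is from the true $d^{(m)}$).

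First I would write
$$\frac{1}{N}\sum_{m=1}^{N}(d_i^{(m)} - d^{(m)}) \;=\; \frac{1}{N}\sum_{m=1}^{N}(d_i^{(m)} - \wtilde d^{(m)}) \;+\; \frac{1}{N}\sum_{m=1}^{N}(\wtilde d^{(m)} - d^{(m)}),$$
apply the triangle inequality to the norm, and then push the outer norm through the average so the two contributions are each controlled by per-token norms. For the distortion part, the upper half of \pref{assump: two_sided_Lipschitz} gives $\|\wtilde d^{(m)} - d^{(m)}\| \le c_b \|g(\wtilde d^{(m)}) - g(d^{(m)})\|$; combined with the triangle inequality and \pref{defi: distortion_extent}, this yields a contribution at most $c_b \Delta$. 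For the regret part, I would again use the upper half of Bi-Lipschitz to convert the token distance to an encoding distance, and invoke \pref{assump: bounds_for_optimization_alg}, which bounds $\sum_{i=1}^{I}\|g(d_i^{(m)}) - g(d^{(m)})\|$ by $c_2 I^{p}$; dividing by $I$ leaves the $c_b c_2 I^{p-1}$ contribution. Summing the two bounds and dividing by $\Omega$ produces the desired lower bound on $R(w)$.

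The main obstacle is bookkeeping rather than any new mathematical ingredient. In particular, \pref{assump: bounds_for_optimization_alg} is phrased in terms of the true token $d^{(m)}$ while the natural decomposition above measures attacker progress against the observed $\wtilde d^{(m)}$, so care is needed to route the Bi-Lipschitz step correctly (either directly through $d^{(m)}$ using the regret bound, or through $\wtilde d^{(m)}$ with an extra triangle step folded into $\Delta$) so that the two contributions combine cleanly into $c_b\Delta + c_b c_2 I^{p-1}$. A related subtlety is the order in which the averages over tokens (index $m$) and iterations (index $i$) are collapsed: applying the triangle inequality in the wrong order can either drop the $\Delta$ term or introduce spurious factors of $N$. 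Once these orderings are pinned down, no further ideas beyond the triangle inequality, the Bi-Lipschitz bound, and the self-bounded regret assumption are required.
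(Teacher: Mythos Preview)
Your proposal is correct and follows essentially the same route as the paper's proof: insert $\wtilde d^{(m)}$, apply the triangle inequality to split into a distortion part and an attacker-regret part, use the upper bi-Lipschitz bound on each, identify the first with $c_b\Delta$ via \pref{defi: distortion_extent}, and bound the second by $c_b c_2 I^{p-1}$ via \pref{assump: bounds_for_optimization_alg}. The paper handles your flagged subtlety by applying the regret bound directly with $\wtilde d^{(m)}$ as the target (since that is what the attacker observes), and it applies the bi-Lipschitz inequality to the averaged quantities rather than per-token, which avoids any spurious factor of $N$.
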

}


\blue{
The following Lemma \ref{lem: total_variation-utility trade-off_mt_0} establishes a theoretical relationship between privacy leakage and total variation distance in the context of prompt distortion. By considering prompt embedding's distributions before and after distortion, we derive a lower bound on leakage of the privacy measured by $\epsilon_{p}$. This is expressed with respect to these distributions' total variance distance. This result provides a theoretical interconnection between privacy and TV distance, indicating that as the total variation distance increases, the privacy leakage also increases. The derived bound involves constants related to the Lipschitz continuity property, the optimization algorithm, and the overall distortion extent. This analysis contributes to the understanding of the interplay between privacy and utility in the presence of prompt distortion.
}

\begin{lem}[Theoretical Relationship between Privacy Leakage and Total Variation Distance]\label{lem: total_variation-utility trade-off_mt_0}
Let $\epsilon_{p}$ be defined in \pref{defi: privacy_leakage}. Let $\wtilde P$ and $\breve P$ denote the distorted prompt embedding distribution and the independent random prompt embedding distribution. Given the preceding conditions, it follows
\begin{align*}
    \epsilon_{p} \ge C_1\cdot {\text{TV}}(\wtilde P || \breve P ),
\end{align*}
where $C_1 = \frac{\left(1 - \frac{c_b + c_2\cdot c_b I^{p-1}}{\Omega}\right)}{c}$, $c_2$ is introduced in \pref{assump: bounds_for_optimization_alg}, $c_b$ is introduced in \pref{assump: two_sided_Lipschitz}, and $c$ is introduced in \pref{assump: bound_on_R}. For an in-depth examination, consult \pref{sec: relation_privacy_total_variation} on the nexus between privacy and total variation.\end{lem}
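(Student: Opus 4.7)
My plan is to convert the pointwise inequality in Assumption~\ref{assump: bound_on_R} into a bound driven by $\text{TV}(\wtilde P || \breve P)$ by means of a maximal coupling between $\wtilde P$ and $\breve P$, and then insert the pointwise lower bound on $R$ supplied by Lemma~\ref{lem: bound_for_privacy_leakage_mt_0}. Throughout I will read $\epsilon_p$ in its non-negative orientation $R(\wtilde P) - R(\breve P)$, consistent with the informal description and Remarks~(1) and~(3) following Definition~\ref{defi: privacy_leakage}, and I will read Assumption~\ref{assump: bound_on_R} as comparing a distorted embedding against an embedding that is independent of the client's prompt.

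First, introduce any coupling $\pi$ of $\wtilde P$ and $\breve P$ and use linearity to write $\epsilon_p = \E_{(\wtilde w,\breve w)\sim\pi}[R(\wtilde w) - R(\breve w)]$. Choose $\pi$ to be the maximal coupling, so that $\pi(\wtilde w = \breve w) = 1 - \text{TV}(\wtilde P || \breve P)$ and the integrand vanishes on the diagonal, yielding $\epsilon_p = \E_\pi[(R(\wtilde w) - R(\breve w))\,\one\{\wtilde w \neq \breve w\}]$. On the off-diagonal event, apply Assumption~\ref{assump: bound_on_R} with $w = \breve w$ to obtain $R(\wtilde w) - R(\breve w) \ge R(\wtilde w)/c$, and combine with the pointwise consequence of Lemma~\ref{lem: bound_for_privacy_leakage_mt_0} using the uniform bound $\Delta \le 1$ from Definition~\ref{defi: distortion_extent}, namely $R(\wtilde w) \ge 1 - (c_b + c_2 c_b I^{p-1})/\Omega$. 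Factoring this now-constant lower bound out of the expectation and using $\pi(\wtilde w \neq \breve w) = \text{TV}(\wtilde P || \breve P)$ yields $\epsilon_p \ge C_1 \cdot \text{TV}(\wtilde P || \breve P)$, which is exactly the stated bound with the $C_1$ from the lemma.

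The main obstacle is conceptual rather than computational: Assumption~\ref{assump: bound_on_R} is literally vacuous (and in fact backwards) when $w = \wtilde w$ is allowed, so one has to argue that it is only ever invoked on the off-diagonal event of the coupling, which is precisely where the maximal coupling concentrates all of the probability slack measured by $\text{TV}(\wtilde P || \breve P)$. A secondary care point is that Lemma~\ref{lem: bound_for_privacy_leakage_mt_0} must be applied with its $\Delta$-dependence replaced by the uniform bound $\Delta \le 1$; otherwise the constant factor would depend on a random quantity inside the expectation and would fail to simplify to the $C_1$ stated in the lemma.
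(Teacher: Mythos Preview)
Your argument is correct and reaches the stated bound, but it is packaged differently from the paper's proof. The paper does not introduce a coupling; instead it splits the signed measure $d\wtilde P - d\breve P$ into its positive and negative parts by defining $\mathcal{V}=\{w:\wtilde p(w)>\breve p(w)\}$ and $\mathcal{U}=\{w:\breve p(w)>\wtilde p(w)\}$, writes $\epsilon_p=\int_{\mathcal V}R\,(d\wtilde P-d\breve P)-\int_{\mathcal U}R\,(d\breve P-d\wtilde P)$, and then bounds this below by $(\inf_{\mathcal V}R-\sup_{\mathcal U}R)\cdot\text{TV}(\wtilde P\|\breve P)$ before invoking Assumption~\ref{assump: bound_on_R} and Lemma~\ref{lem: bound_for_privacy_leakage_mt_0} with $\Delta\le 1$. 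Your maximal-coupling route is the probabilistic dual of this decomposition: the off-diagonal marginals of the maximal coupling are exactly the normalized positive and negative parts of $\wtilde p-\breve p$, so your ``$\wtilde w$ on the off-diagonal'' lives in $\mathcal V$ and your ``$\breve w$'' lives in $\mathcal U$, which is why Assumption~\ref{assump: bound_on_R} applies in the same way in both arguments. What your formulation buys is that the passage to $\text{TV}$ is automatic via $\pi(\wtilde w\neq\breve w)=\text{TV}(\wtilde P\|\breve P)$ and no separate $\inf/\sup$ bookkeeping is needed; what the paper's formulation buys is that no coupling construction needs to be invoked at all. Both proofs rely on the same interpretive reading of Assumption~\ref{assump: bound_on_R} (with $w$ drawn from the $\breve P$ side) and on the same uniform replacement $\Delta\le 1$ in Lemma~\ref{lem: bound_for_privacy_leakage_mt_0}, which you correctly flag.
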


\blue{
In the following Lemma \ref{lem: relation_utility_distortion_mt} establishes a theoretical relationship between utility loss and the total variation distance in the context of prompt distortion. Considering prompt embedding's distributions before and after distortion, we derive a lower bound on the utility loss measured by $\epsilon_{u}$. The bound is expressed with respect to these distributions' total variation distance. This result reveals that as the total variation distance increases, the utility loss also increases. The bound is characterized by the constant $\alpha$, which is related to the assumptions regarding the distorted embedding. This theorem provides valuable insights into understanding the impact of prompt distortion on the utility of the embedding, highlighting the trade-off between utility preservation and the level of distortion incurred.
}

\begin{lem}[Theoretical Relationship between Utility Loss and Total Variation Distance]\label{lem: relation_utility_distortion_mt}
Suppose \pref{assump: assump_of_Delta} is valid, and $\epsilon_{u}$ be defined in \pref{defi: utility_loss}. Let $P$ and $
\wtilde P$ represent embedding's distribution before and after being distorted. Given the preceding conditions, it follows
\begin{align*}
    \epsilon_{u} \ge \frac{\alpha}{2}\cdot {\text{TV}}(P || \wtilde{P}),
\end{align*}
where $\alpha$ is introduced in \pref{assump: assump_of_Delta}. For an in-depth examination, consult \pref{sec: relation_total_variation_utility} for detailed analysis.
\end{lem}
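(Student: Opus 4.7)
The plan is to reduce the utility gap $\epsilon_u = U(P) - U(\wtilde P)$ to an expectation of the utility deficit $f(w^{\ast}) - f(w)$ under $\wtilde P$, where $f(w) := \mathbb E_{s \sim P_0} U(w,s)$, and then apply the near-optimal set bound from \pref{assump: assump_of_Delta} together with the fact that $\mathrm{TV}(P \| \wtilde P) \le 1$.

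\textbf{Step 1: rewrite and reduce.} First I would expand the utility loss as $\epsilon_u = \mathbb E_{w \sim P} f(w) - \mathbb E_{w \sim \wtilde P} f(w)$. Since $w^{\ast}$ maximizes $f$, we have $U(P) \le f(w^{\ast})$. In the NFL framework the undistorted distribution $P$ corresponds to the ideal (unprotected) prompt, so $P$ places its mass on $\mathcal W^{\ast}$, giving $U(P) = f(w^{\ast})$. Hence
\begin{align*}
\epsilon_u = \mathbb E_{w \sim \wtilde P}\bigl[f(w^{\ast}) - f(w)\bigr].
\end{align*}

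\textbf{Step 2: split on $\calW_\alpha$.} I would partition the support of $\wtilde P$ into $\calW_\alpha$ and its complement, and use the characterization from \pref{defi: neighbor_set}: on $\calW_\alpha$ the integrand $f(w^{\ast}) - f(w)$ is non-negative (by optimality of $w^{\ast}$) and at most $\alpha$; on $\calW_\alpha^{c}$ it is strictly greater than $\alpha$ (this is precisely what it means to lie \emph{outside} the near-optimal set). Dropping the non-negative contribution from $\calW_\alpha$ yields
\begin{align*}
\epsilon_u \;\ge\; \alpha \cdot \wtilde P(\calW_\alpha^{c}) \;=\; \alpha\bigl(1 - \wtilde P(\calW_\alpha)\bigr).
\end{align*}

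\textbf{Step 3: invoke \pref{assump: assump_of_Delta} and finish.} Applying the assumption $\wtilde P(\calW_\alpha) \le \mathrm{TV}(P \| \wtilde P)/2$ gives $\epsilon_u \ge \alpha\bigl(1 - \mathrm{TV}(P \| \wtilde P)/2\bigr)$. Since the total variation distance lies in $[0,1]$, we have $1 - \mathrm{TV}(P \| \wtilde P)/2 \ge \mathrm{TV}(P \| \wtilde P)/2$, which collapses the bound to $\epsilon_u \ge (\alpha/2)\,\mathrm{TV}(P \| \wtilde P)$.

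\textbf{Main obstacle.} The delicate point is Step 1: justifying $U(P) = f(w^{\ast})$, i.e.\ that the unprotected distribution $P$ attains the maximum expected utility. This is consistent with the NFL setup (the client's genuine embedding is the one that best accomplishes its task), but any slack $f(w^{\ast}) - U(P)$ must be separately controlled; otherwise the lower bound would degrade. An alternative is to restate the lemma with $f(w^{\ast})$ in place of $U(P)$, which avoids the issue entirely and follows directly from Steps 2--3. I expect Steps 2--3 to be routine given \pref{defi: neighbor_set} and \pref{assump: assump_of_Delta}; the mild asymmetry $1 - x/2 \ge x/2$ for $x \in [0,1]$ is the only numerical inequality needed.
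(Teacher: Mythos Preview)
Your argument is correct. Both your proof and the paper's rely on the same unstated hypothesis you flag in Step~1: that the undistorted distribution $P$ is supported on optimal embeddings, so that $U(P)=U(w^*)$. The paper inserts this explicitly in the proof (``It is assumed that the utility of the unprotected model information achieves the maximal value at the convergence step. Therefore $\mathcal W\subset\mathcal W^*$''), which is exactly your Step~1 assumption; so your identification of this as the main obstacle is accurate and matches what the paper actually does.

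Where the two arguments diverge is after Step~1. The paper splits the signed measure $P-\wtilde P$ into its positive part $\mathcal V$ and negative part $\mathcal U$, uses $\mathcal V\subset\mathcal W^*$ to replace $U(w)$ by $U(w^*)$ on $\mathcal V$, and arrives at $\epsilon_u\ge\alpha\bigl(\mathrm{TV}(P\|\wtilde P)-\wtilde P(\calW_\alpha)\bigr)$ before invoking \pref{assump: assump_of_Delta}. You instead write $\epsilon_u=\mathbb E_{\wtilde P}[U(w^*)-U(w)]$ directly, split only on $\calW_\alpha$ versus its complement, and obtain the intermediate bound $\epsilon_u\ge\alpha\bigl(1-\wtilde P(\calW_\alpha)\bigr)$; you then recover the $\mathrm{TV}$ factor at the end via the elementary inequality $1-x/2\ge x/2$ for $x\in[0,1]$. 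Your route is shorter and avoids the Hahn decomposition entirely; it also yields a nominally stronger intermediate bound (since $1\ge\mathrm{TV}$), though both collapse to the same final inequality $\epsilon_u\ge(\alpha/2)\,\mathrm{TV}(P\|\wtilde P)$. The paper's decomposition has the advantage that $\mathrm{TV}$ appears organically rather than through the crude bound $\mathrm{TV}\le 1$, but for the stated lemma this makes no difference.
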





\blue{Upon combining Lemma 4.2 with Lemma 4.3, we deduce a quantitative relation between decrement of utility and leakage of privacy, as articulated in Theorem~\ref{thm: utility-privacy trade-off_mt}.}

\begin{thm}[No Free Lunch Theorem for Privacy-Preserving LLM Inference]\label{thm: utility-privacy trade-off_mt} 
Suppose $\epsilon_p$ be defined in \pref{defi: privacy_leakage}, and define $\epsilon_u$ as in \pref{defi: utility_loss}, given \pref{assump: assump_of_Delta}, it follows

\begin{align}\label{thm:nfl}
    \frac{C_2}{C_1}\cdot\epsilon_p + \epsilon_u\ge C_2\cdot{\text{TV}}(P|| \breve{P}),
\end{align}
in which
\begin{itemize}
\item $C_1 = \frac{\left(1 - \frac{c_b + c_2\cdot c_b I^{p-1}}{\Omega}\right)}{c}$, where $c_2$ is introduced in \pref{assump: bounds_for_optimization_alg}, $c_b$ is introduced in \pref{assump: two_sided_Lipschitz}, and $c$ is introduced in \pref{assump: bound_on_R}.

\item  $C_2 = \frac{\alpha}{2}$, where $\alpha$ is introduced in \pref{assump: assump_of_Delta}.

\item  $\text{TV}(P|| \breve{P})$ is a constant representing the distribution's total variation distance of the undistorted embedding and the distribution's total variation distance of embedding independent of the client's prompt embedding. This constant is independent of the protection mechanisms.
\end{itemize}

For an in-depth analysis, see \pref{sec: analysis_NFL_LLM}. 

\end{thm}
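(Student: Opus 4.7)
The plan is to chain the two preceding lemmas through the triangle inequality for total variation distance. \pref{lem: total_variation-utility trade-off_mt_0} already supplies a lower bound on $\epsilon_p$ in terms of $\mathrm{TV}(\wtilde P\|\breve P)$, while \pref{lem: relation_utility_distortion_mt} supplies a lower bound on $\epsilon_u$ in terms of $\mathrm{TV}(P\|\wtilde P)$. The key observation is that $\wtilde P$ appears in both bounds, so by rescaling the first bound to share a common prefactor $C_2$ with the second and summing, the right-hand side becomes $C_2\bigl[\mathrm{TV}(\wtilde P\|\breve P)+\mathrm{TV}(P\|\wtilde P)\bigr]$, which the triangle inequality collapses to $C_2\cdot\mathrm{TV}(P\|\breve P)$.

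Concretely, the steps I would carry out are as follows. First, invoke \pref{lem: total_variation-utility trade-off_mt_0} to obtain $\epsilon_p\ge C_1\cdot\mathrm{TV}(\wtilde P\|\breve P)$, and rearrange to $(C_2/C_1)\epsilon_p\ge C_2\cdot\mathrm{TV}(\wtilde P\|\breve P)$. Second, invoke \pref{lem: relation_utility_distortion_mt} to obtain $\epsilon_u\ge (\alpha/2)\cdot\mathrm{TV}(P\|\wtilde P)=C_2\cdot\mathrm{TV}(P\|\wtilde P)$. Third, add these two inequalities to get
\begin{equation*}
\frac{C_2}{C_1}\epsilon_p+\epsilon_u\;\ge\;C_2\Bigl[\mathrm{TV}(\wtilde P\|\breve P)+\mathrm{TV}(P\|\wtilde P)\Bigr].
\end{equation*}
Fourth, apply the triangle inequality for total variation distance (it is a metric on the space of probability measures): $\mathrm{TV}(P\|\breve P)\le\mathrm{TV}(P\|\wtilde P)+\mathrm{TV}(\wtilde P\|\breve P)$. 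Substituting gives the stated bound.

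The deduction itself is essentially mechanical once the two lemmas are in hand, so the ``main obstacle'' is interpretive rather than technical: one must verify that the right-hand side is a genuinely nonzero, protection-mechanism-independent constant. Both $P$ (the distribution of the undistorted embedding) and $\breve P$ (the reference distribution of an embedding independent of the client's prompt) are determined by the problem instance and not by the choice of $\mathcal M$, so $\mathrm{TV}(P\|\breve P)$ is indeed a constant of the problem. Assuming the client's prompt carries nontrivial information, this TV distance is strictly positive, which is what forces the weighted sum of $\epsilon_p$ and $\epsilon_u$ to be bounded away from zero and delivers the no-free-lunch conclusion. A minor bookkeeping point I would verify at the end is that $C_1>0$ under \pref{assump: bounds_for_optimization_alg} and \pref{assump: two_sided_Lipschitz} (so that dividing by $C_1$ is legal), which reduces to checking that $\Omega$ is large enough relative to $c_b+c_2\cdot c_b I^{p-1}$ in the regime of interest.
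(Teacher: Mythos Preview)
Your proposal is correct and matches the paper's own proof essentially step for step: the paper also invokes the two lemmas to obtain $\epsilon_p\ge C_1\cdot\text{TV}(\wtilde P\|\breve P)$ and $\epsilon_u\ge C_2\cdot\text{TV}(P\|\wtilde P)$, rescales the first by $C_2/C_1$, adds, and finishes with the triangle inequality for total variation. Your added remarks on verifying $C_1>0$ and on the protection-independence of $\text{TV}(P\|\breve P)$ are sound sanity checks that the paper leaves implicit.
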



\blue{

\pref{thm: utility-privacy trade-off_mt} demonstrates that sum of utility loss and privacy leakage incurred by protecting the prompt is lower bounded by a constant that is contingent on specific problem. This theorem essentially asserts that when interacting with an LLM using protected prompts, the client cannot attain infinitesimal levels of privacy leakage and utility loss simultaneously. Instead, \textit{a trade-off must be made, wherein a reduction in privacy leakage ($\epsilon_{p}$) is accompanied by a corresponding increase in utility loss ($\epsilon_u$) and vice versa}.

This principle is similar to the No-Free-Lunch theorem for federated learning proposed by Zheng et al.~\cite{zhang2022no}, which demonstrated that the clients have to trade-off between privacy and utility when they mutually train a global model in privacy-preserving setting. Our study differs from Zhang et al.'s work in that we study the privacy-utility trade-off during LLM inference, whereby prompts and LLM responses are exchanged between a client and a server. In contrast, Zhang et al.~\cite{zhang2022no} focused on federated learning, during which the clients and the server exchange model parameters and gradients. This difference requires a fundamentally distinct definition of privacy leakage, thereby entailing a novel theoretical analysis of the privacy-utility trade-off. 

}



\section{Experiment}
In this section, we first introduce InferDPT, a privacy-preserving algorithm for LLM inference proposed by \cite{tong2023infer}, which safeguards user prompt privacy by injecting controlled noise into the embedding vectors of the original prompts, with the aim of exploring the inherent trade-off between privacy leakage and utility loss. Next, we outline the experimental setup and apply our definitions of privacy leakage and utility loss to evaluate how these factors evolve as the level of privacy protection is reduced.


\subsection{InferDPT Algorithm Overview}
The InferDPT framework is an innovative application of differential privacy\cite{dwork2006differential} in LLM inference, designed to enable privacy-preserving text generation, particularly for remote black-box LLMs. The framework comprises two main components: a perturbation module and an extraction module.

\textbf{Perturbation Module: }
The perturbation module protects privacy by introducing noise to the embedding vectors of each token in the original document, leveraging a differential privacy mechanism. The noise level is controlled by the privacy budget, $\epsilon$, with smaller values of $\epsilon$ providing stronger privacy guarantees. Additionally, the module generates a random adjacency list for each token, selecting replacement tokens to further strengthen privacy protection. This approach maintains the semantic coherence of the text while ensuring user privacy. The final perturbed document is created by applying these steps to each token in the original text. Subsequently, the user combines the perturbed document with task instructions to form a complete prompt, which is submitted to the remote black-box LLM. The LLM then generates a privacy-preserving output based on the perturbed input.

\textbf{Extraction Module: }
The extraction module employs a local LLM, which is lightweight and deployable, while less capable than the remote black-box LLM. This local LLM generates text based on the original document and the privacy-preserving output as a reference.

The InferDPT algorithm offers several key advantages, including effective user prompt privacy protection, resistance to embedding inversion attacks, and its high-quality text generation. Furthermore, the algorithm is highly adaptable to various local LLM and deployment environments. Its modular design—comprising a perturbation module and an extraction module—enhances scalability, accommodating diverse privacy requirements and application scenarios. Specifically, InferDPT enables differential privacy across a wide range of text data scales and complexities. InferDPT incurs lower computational overhead compared to homomorphic encryption approaches (e.g., CipherGPT\cite{hou2023ciphergpt}), effectively balancing computational costs through dynamic sampling. Experimental results show that InferDPT maintains text generation quality on par with non-private GPT-4 across multiple datasets and a range of \(\epsilon\) values.

However, the InferDPT algorithm also has some limitations. The quality of the generated text depends on the choice of the local LLM and the design of prompt. More importantly, it requires a trade-off between privacy protection and text generation quality, especially when strong privacy guarantees are applied. Additionally, deploying InferDPT requires certain hardware resources to run the local LLM, which may not be user-friendly for ordinary users.

Based on the above analysis, we believe that InferDPT is highly suitable for validating our theory. We will build upon this algorithm to explore the trade-off between privacy leakage and utility loss.

\subsection{Experiment Setup}
\textbf{Dataset:} We used the CNN/Daily Mail dataset\cite{hermann2015teachingmachinesreadcomprehend} to perform an open-ended text generation task. Following previous works\cite{tong2023infer}\cite{welleck2019neuraltextgenerationunlikelihood}\cite{xu2022learningbreakloopanalyzing}, perturbed document containing 50 tokens is provided to the remote black-box LLM or the local LLM, with additional task descriptions appended to form a complete prompt (all prompts can be found in \ref{prompt}). LLM then generates the following 100 tokens based on this prompt.

\textbf{Model:} We used GPT-3.5-turbo as the remote black-box LLM and Vicuna-7b-4bit as the local LLM. Both models were configured with a temperature setting of $0.5$ and a maximum token generation limit (\textit{max\_tokens}) of $150$.

\textbf{Quantifying Privacy Leakage $\epsilon_p$:} 
We first randomly select 50 tokens from the vocabulary to simulate the remote black-box LLM's attempt to recover the perturbed document through random guessing. However, we find that directly calculating the recovery extent $R$ as defined in Definition \ref{defi: privacy_leakage} is problematic because the black-box LLM's responses do not have a direct correspondence with the perturbation tokens. To address this issue, we use cosine similarity as a measure of the recovery extent. Cosine similarity is defined as $cos(\theta) = \frac{A \cdot B}{|A| |B|}$, which is consistent with the definition of recovery extent in Definition \ref{defi: privacy_leakage}, where $cos(\theta)$ closer to 0 indicates a higher recovery extent, and $cos(\theta)$ closer to 1 indicates a lower recovery extent.

We calculate the cosine similarity $cos(\theta))_1$ between the randomly guessed document and the original document, denoted as $R(\breve P)$, which represents the recovery extent under random guessing. Then, we send the perturbed document to the remote black-box LLM, which attempts to recover the original document, and calculate the cosine similarity $cos(\theta))_2$ between the recovered document and the original document, denoted as $R(\wtilde P)$, which represents the recovery extent of the remote black-box LLM. According to Definition \ref{defi: privacy_leakage}, we can obtain $\epsilon_p$. Our expected result is that as the privacy budget $\epsilon$ increases, $R(\breve P)$ remains stable, while $R(\wtilde P)$ gradually increases, indicating that the remote black-box LLM's ability to recover the user's privacy increases, leading to an increase in privacy leakage $\epsilon_p$.

\textbf{Quantifying Utility Loss $\epsilon_u$:} According to our definition of utility loss in \ref{defi: utility_loss}, Utility loss quantifies the difference in performance between LLM inference with and without privacy protection. Specifically, we first send the original document to the remote black-box LLM for a text continuation task, obtaining 100 generated tokens. We calculate the utility of this output as \( U(P) \). Next, we perturb the original document to create a protected version and send it to the remote black-box LLM again, generating another set of 100 tokens. These 100 tokens, along with the original prompt, are then fed into the local LLM to extract information and calculate its utility metrics, denoted as \( U(\widetilde{P}) \). The utility loss of InferDPT is quantified as the difference between these two utilities.

To evaluate utility, we employ several metrics commonly used in open-ended text generation tasks, including BERTScore \cite{zhang2020bertscoreevaluatingtextgeneration}, BLEU, Keyword Coverage, Semantic Similarity, Diversity, Coherence, and ROUGE (ROUGE-1, ROUGE-2, and ROUGE-L). 
 These metrics are described in detail as follows:

\begin{itemize}
    \item \textbf{BERTScore}: Measures the semantic similarity between the generated and reference texts using BERT embeddings, reflecting content consistency. It is particularly effective for evaluating semantic fidelity, as it captures the contextual meaning of words through pre-trained BERT models.
    \item \textbf{BLEU}: Evaluates the precision of n-grams in the generated text compared to the reference text, providing a measure of fluency and faithfulness. BLEU is widely used in tasks such as machine translation and text generation.
    \item \textbf{Keyword Coverage}: Assesses the extent to which the generated text retains key concepts or terms from the reference text. This ensures that essential information or keywords are accurately represented in the output.
    \item \textbf{Semantic Similarity}: Beyond BERTScore, this metric captures deeper contextual and conceptual alignment between the generated and reference texts, providing a complementary perspective on semantic fidelity.
    \item \textbf{Diversity}: Measures the variety and richness of the generated text, ensuring that the model avoids repetitive outputs. This is particularly important under privacy constraints, which can sometimes reduce variability in the content.
    \item \textbf{Coherence}: Evaluates the logical flow and structural consistency of the generated text, ensuring that the output aligns with the expected narrative or topic and maintains readability.
    \item \textbf{ROUGE Scores (ROUGE-1, ROUGE-2, and ROUGE-L)}: Measure the overlap of n-grams between the generated and reference texts. ROUGE-1 focuses on unigram overlap, ROUGE-2 on bigrams, and ROUGE-L considers the longest common subsequences. These metrics are widely used for assessing textual accuracy, coverage, and relevance.
\end{itemize}

We divided privacy protection into 24 distinct levels and calculated both privacy leakage and utility loss for each level. By plotting privacy leakage on the x-axis and utility loss on the y-axis, we mapped points corresponding to different privacy budgets or noise intensities. As expected, utility loss decreases gradually as privacy leakage increases, indicating a reduction in privacy protection strength.

\subsection{Design of Prompts for the InferDPT Framework}\label{prompt}
In our study, we employed different quantification strategies for measuring privacy leakage and utility loss, which required distinct prompt designs. Specifically, for quantifying privacy leakage, we tasked the remote black-box LLM with attempting to recover the perturbed documents. For quantifying utility loss, we instructed the remote black-box LLM to complete continuation tasks based on the input text, and then used the local LLM to extract relevant content from the original document.

\subsubsection{Prompts for Quantifying Privacy Leakage}

We designed the following prompt to instruct the remote black-box LLM to act as an attacker attempting to recover the perturbed documents:

\textit{You are a text recovery expert. Your task is to recover the original tokens that were replaced by their semantically similar alternatives.}

\textit{Examples:}
\begin{itemize}
    \item \textit{Original: The company announced record profits in the third quarter.}\\
          \textit{Perturbed: Business speak happy money three time part.}\\
          \textit{Recovered: The business reported positive financial results in the third period.}
    \item \textit{Original: Scientists discover new species in Amazon rainforest.}\\
          \textit{Perturbed: People find thing green tree place.}\\
          \textit{Recovered: Researchers identified a new organism in the tropical forest.}
\end{itemize}

\textit{Instructions:}
\begin{itemize}
    \item \textit{Perturbed:}
    \item \textit{You should only respond with the recovered tokens.}
    \item \textit{Output format: [token1] [token2] [token3] ...}
\end{itemize}

\subsubsection{Prompts for Quantifying Utility Loss}
To quantify utility loss, we first provide the original document to the remote black-box LLM to generate a continuation. Next, the remote black-box LLM is tasked with generating a continuation for the perturbed document. To evaluate the utility loss, we design prompts for the local LLM to extract the continuation content from the remote black-box LLM's responses, using the original document as a reference. Accordingly, this section presents the prompts designed for both the remote black-box LLM's continuation task and the local LLM's extraction task.

\textbf{Prompt for Remote Black-box LLM:}

\textit{
You are a creative writer skilled in continuing stories and texts in a natural and engaging way.Your task is to generate a coherent continuation that matches the style and context of the original text.}

\textit{
Continue this text in a natural and engaging way, maintaining the same style and tone. Write 2-3 sentences that flow naturally from the original text.}

\textit{
——Original Text: }

\textit{
——[your continuation]}

\textbf{Prompt for Local LLM:}

\textit{
Analyze the coherence between the "Prefix Text" and "Perturbed Generation", then follow these rules:}

\textit{
        1. If the "Perturbed Generation" meets ALL these criteria:}

\textit{
           - Continues the story/topic from the "Prefix Text" naturally}

\textit{
           - Maintains consistent facts and details}

\textit{
           - Uses similar tone and style}

\textit{
           → Output the "Perturbed Generation" exactly as is}

\textit{
        2. Otherwise, if ANY criteria are not met:}

\textit{
           - Write a new continuation that:}

\textit{
           - Directly follows from the last sentence of "Prefix Text"}

\textit{
           - Matches the style and tone of "Perturbed Generation"}

\textit{
           - Maintains factual consistency with "Prefix Text"}

\textit{
        OUTPUT ONLY the continuation text, with no explanations or additional text.}

\textit{
        ——"Prefix Text": }

\textit{
        ——"Perturbed Generation": }

\textit{
        ——"Your Continuation":}

\subsection{Results}

\begin{figure}[thpb]
  \centering
  \includegraphics[width=0.5\linewidth]{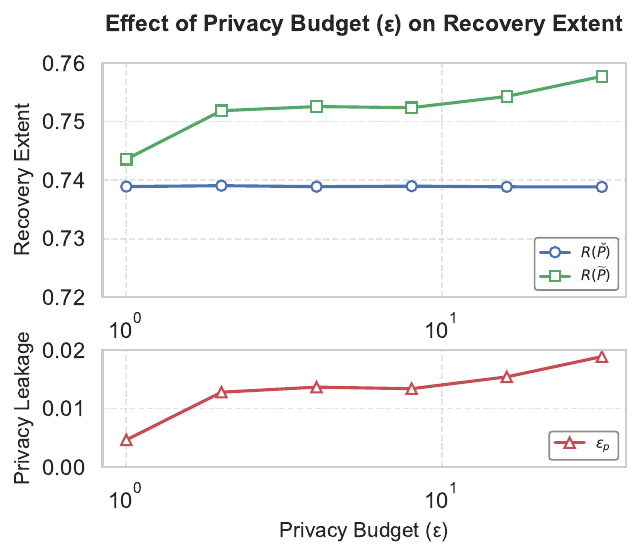}

  \caption{  \textbf{Privacy Leakage Analysis with Varying Privacy Budget ($\epsilon$).} The figure illustrates the relationship between the privacy budget \(\epsilon\) and two key metrics: recovery extent and privacy leakage. The top plot displays the recovery extents of the random document $R(\breve P)$ and the perturbed document $R(\wtilde P)$, while the bottom plot shows the privacy leakage, both as functions of the privacy budget \(\epsilon\).} 
  \label{privacy_analysis}
\end{figure}
\begin{figure}[thpb]
  \centering
  \includegraphics[width=1\linewidth]{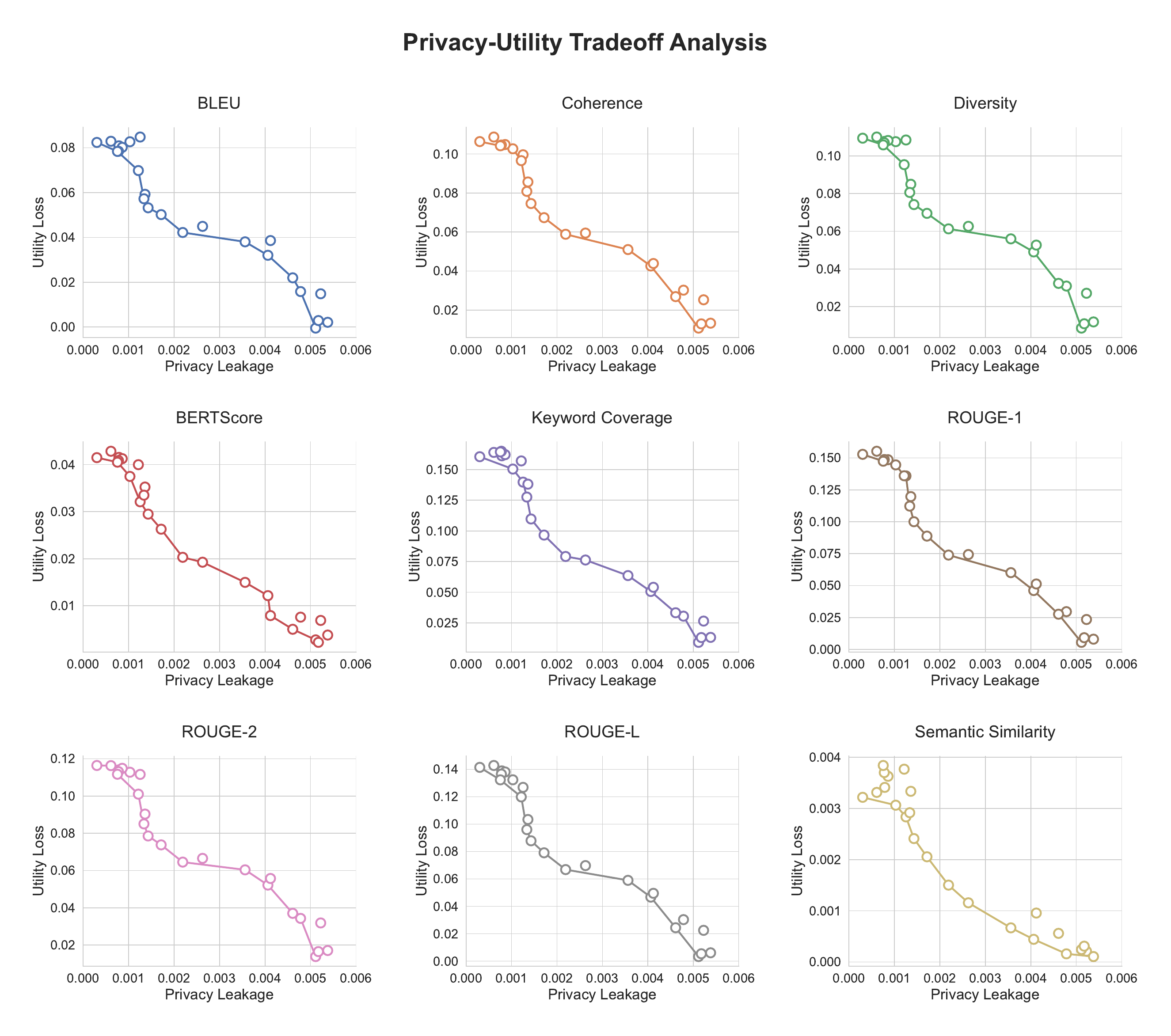}
  \caption{ \textbf{Privacy and Utility Tradeoff.} This figure illustrates the relationship between nine utility loss metrics (vertical axis) and one privacy leakage metric (horizontal axis). Each subplot shows the results for 24 different privacy budgets $\epsilon$, reflecting the corresponding utility loss and privacy leakage. As $\epsilon$ decreases, privacy leakage increases while utility loss decreases. This clearly demonstrates the tradeoff between privacy leakage and utility loss. } 
  \label{tradeoff}
\end{figure}

According to our definition of privacy leakage Definition\ref{defi: privacy_leakage}, a recovery extent $R$ closer to 0 indicates a higher amount of recovered private information. As shown in figure \ref{privacy_analysis}, $R(\wtilde P)$ increases as the privacy budget $\epsilon$ increases, suggesting that a higher $\epsilon$ reduces the difficulty of recovering protected prompts, allowing the remote black-box LLM to retrieve more private information from them. In contrast, the $R(\breve P)$ rate remains relatively stable across different $\epsilon$ values, indicating that random guessing does not affect the amount of recovered private information with changes in the privacy budget. This stability serves as a baseline, highlighting the sensitivity of protected prompt recovery to changes in the privacy budget $\epsilon$ in remote black-box LLMs.

Additionally, privacy leakage ($\epsilon_p$) increases as $\epsilon$ rises, consistent with the expectation that a higher privacy budget allows for greater information leakage. At lower values of $\epsilon$, privacy leakage is minimal, and in some cases even negative, indicating strong privacy protection, where the remote black-box LLM's ability to recover protected prompts is worse than random guessing. However, as $\epsilon$ approaches higher values, privacy leakage increases, suggesting that with a larger privacy budget, the remote black-box LLM is able to recover more private information from the protected prompts, thereby increasing the privacy risk.

Although this result is consistent with our defined relationship between privacy leakage and $\epsilon$, it is important to note that we did not use a specialized recovery algorithm for the experiment. Instead, recovery was performed by sending instructions to the remote black-box LLM. We believe that employing a dedicated iterative recovery algorithm could further improve the accuracy of privacy leakage detection.

Moreover, as shown in Figure \ref{tradeoff}, this experimental result illustrates the relationship between privacy leakage and utility loss, providing a clear visualization of the trade-off between the two. The figure reveals that as privacy leakage increases—indicating a decrease in the level of privacy protection—all utility metrics (e.g., BERTScore and semantic similarity) exhibit a declining trend. This demonstrates that InferDPT requires a balance between privacy leakage and utility loss: while higher levels of privacy protection effectively reduce privacy leakage, they come at the cost of significant utility loss. Conversely, lower levels of privacy protection improve utility but compromise privacy. Only at an optimal level of privacy protection can both privacy and utility be reasonably balanced. These findings strongly validate our theory that there is an inherent trade-off between privacy leakage and utility loss, aligning well with the "no free lunch" theorem. This discovery offers valuable insights for optimizing privacy-preserving strategies in LLM inference.
\section{Conclusion}

While LLMs offer tremendous benefits in terms of productivity enhancement and data analysis, their usage also raises concerns about the security and privacy of user queries. Protecting privacy is, therefore, crucial when prompting LLMs for inference, and minimizing privacy leakage is a fundamental requirement. Various randomization approaches have been suggested for safeguarding prompts' privacy. However, adoption of protection mechanisms may introduce utility loss. This conflict between minimizing privacy leakage and utility loss motivated us to investigate whether developing a protection mechanism that simultaneously achieves utility loss and minimal privacy leakage is plausible.

To answer this question, we propose a privacy-preserving LLM inference framework. Within this framework, we propose and experimentally verified a No-Free-Lunch Theorem for privacy-preserving LLM inference, demonstrating that the weighted summation of these factors exceeds a constant which is contingent on specific problem and non-zero. This highlights inevitable loss of utility when leakage budget of privacy is excessively constrained.


Moving forward, it is essential to continue exploring novel approaches and solutions to strike an optimal balance between privacy and utility.  Our future study may concentrate on refining and expanding privacy-preserving LLM inference framework, investigating additional protection mechanisms (e.g., encryption), and evaluating their practical and theoretical effectiveness. 

\section{ACKNOWLEDGMENTS}
We thank Yulin Fei for helpful comments. This work was supported by the National Science and Technology Major Project under Grant 2022ZD0115301.

\bibliography{references, llm}

\begin{thebibliography}{10}
\expandafter\ifx\csname url\endcsname\relax
  \def\url#1{\texttt{#1}}\fi
\expandafter\ifx\csname urlprefix\endcsname\relax\def\urlprefix{URL }\fi
\expandafter\ifx\csname href\endcsname\relax
  \def\href#1#2{#2} \def\path#1{#1}\fi

\bibitem{chowdhery2022palm}
A.~Chowdhery, S.~Narang, J.~Devlin, M.~Bosma, G.~Mishra, A.~Roberts, P.~Barham, H.~W. Chung, C.~Sutton, S.~Gehrmann, et~al., Palm: Scaling language modeling with pathways, arXiv preprint arXiv:2204.02311.

\bibitem{Chatgpt}
OpenAI, Chatgpt.

\bibitem{zhang2023toward}
B.~Zhang, J.~Zhu, H.~Su, Toward the third generation artificial intelligence, Science China Information Sciences 66~(2) (2023) 121101.

\bibitem{zhou2023textobfuscator}
X.~Zhou, Y.~Lu, R.~Ma, T.~Gui, Y.~Wang, Y.~Ding, Y.~Zhang, Q.~Zhang, X.~Huang, {T}ext{O}bfuscator: Making pre-trained language model a privacy protector via obfuscating word representations, in: Findings of the Association for Computational Linguistics: ACL 2023, 2023, pp. 5459--5473.

\bibitem{tong2023infer}
M.~Tong, K.~Chen, Y.~Qi, J.~Zhang, W.~Zhang, N.~Yu, Inferdpt: Privacy-preserving inference for black-box large language model, arXiv preprint arXiv:2310.12214.

\bibitem{li2023privacy}
Y.~Li, Z.~Tan, Y.~Liu, Privacy-preserving prompt tuning for large language model services, arXiv preprint arXiv:2305.06212.

\bibitem{zhang2022no}
X.~Zhang, H.~Gu, L.~Fan, K.~Chen, Q.~Yang, No free lunch theorem for security and utility in federated learning, arXiv preprint arXiv:2203.05816.

\bibitem{kang2023grounding}
Y.~Kang, T.~Fan, H.~Gu, L.~Fan, Q.~Yang, Grounding foundation models through federated transfer learning: A general framework, arXiv preprint arXiv:2311.17431.

\bibitem{zhu2012discovering}
Y.~Zhu, X.~Wang, E.~Zhong, N.~Liu, H.~Li, Q.~Yang, Discovering spammers in social networks, in: proceedings of the AAAI conference on artificial intelligence, Vol.~26, 2012, pp. 171--177.

\bibitem{staab2023beyond}
R.~Staab, M.~Vero, M.~Balunovi{\'c}, M.~Vechev, Beyond memorization: Violating privacy via inference with large language models, in: The Twelfth International Conference on Learning Representations, 2024.

\bibitem{qu2021natural}
C.~Qu, W.~Kong, L.~Yang, M.~Zhang, M.~Bendersky, M.~Najork, Natural language understanding with privacy-preserving bert, in: Proceedings of the 30th ACM International Conference on Information \& Knowledge Management, 2021, pp. 1488--1497.

\bibitem{morris2023text}
J.~Morris, V.~Kuleshov, V.~Shmatikov, A.~Rush, Text embeddings reveal (almost) as much as text, in: H.~Bouamor, J.~Pino, K.~Bali (Eds.), Proceedings of the 2023 Conference on Empirical Methods in Natural Language Processing, 2023, pp. 12448--12460.

\bibitem{li2023sentence}
H.~Li, M.~Xu, Y.~Song, \href{https://aclanthology.org/2023.findings-acl.881}{Sentence embedding leaks more information than you expect: Generative embedding inversion attack to recover the whole sentence}, in: Findings of the Association for Computational Linguistics: ACL 2023, Association for Computational Linguistics, Toronto, Canada, 2023, pp. 14022--14040.
\newblock \href {http://dx.doi.org/10.18653/v1/2023.findings-acl.881} {\path{doi:10.18653/v1/2023.findings-acl.881}}.
\newline\urlprefix\url{https://aclanthology.org/2023.findings-acl.881}

\bibitem{kugler2021invbert}
K.~Kugler, S.~M{\"u}nker, J.~H{\"o}hmann, A.~Rettinger, Invbert: Reconstructing text from contextualized word embeddings by inverting the bert pipeline, arXiv preprint arXiv:2109.10104.

\bibitem{Song-2020-Information}
C.~Song, A.~Raghunathan, \href{https://doi.org/10.1145/3372297.3417270}{Information leakage in embedding models}, in: Proceedings of the 2020 ACM SIGSAC Conference on Computer and Communications Security, CCS '20, Association for Computing Machinery, New York, NY, USA, 2020, p. 377–390.
\newblock \href {http://dx.doi.org/10.1145/3372297.3417270} {\path{doi:10.1145/3372297.3417270}}.
\newline\urlprefix\url{https://doi.org/10.1145/3372297.3417270}

\bibitem{li-etal-2022-dont}
H.~Li, Y.~Song, L.~Fan, \href{https://aclanthology.org/2022.naacl-main.429}{You don{'}t know my favorite color: Preventing dialogue representations from revealing speakers{'} private personas}, in: Proceedings of the 2022 Conference of the North American Chapter of the Association for Computational Linguistics: Human Language Technologies, Association for Computational Linguistics, Seattle, United States, 2022, pp. 5858--5870.
\newblock \href {http://dx.doi.org/10.18653/v1/2022.naacl-main.429} {\path{doi:10.18653/v1/2022.naacl-main.429}}.
\newline\urlprefix\url{https://aclanthology.org/2022.naacl-main.429}

\bibitem{song2019overlearning}
C.~Song, V.~Shmatikov, Overlearning reveals sensitive attributes, arXiv preprint arXiv:1905.11742.

\bibitem{hayet2022invernet}
I.~Hayet, Z.~Yao, B.~Luo, Invernet: An inversion attack framework to infer fine-tuning datasets through word embeddings, in: Findings of the Association for Computational Linguistics: EMNLP 2022, 2022, pp. 5009--5018.

\bibitem{yue2021differential}
X.~Yue, M.~Du, T.~Wang, Y.~Li, H.~Sun, S.~S.~M. Chow, Differential privacy for text analytics via natural text sanitization, in: C.~Zong, F.~Xia, W.~Li, R.~Navigli (Eds.), Findings of the Association for Computational Linguistics: ACL-IJCNLP 2021, 2021, pp. 3853--3866.

\bibitem{li2023mpcformer}
D.~Li, H.~Wang, R.~Shao, H.~Guo, E.~Xing, H.~Zhang, {MPCFORMER}: {FAST}, {PERFORMANT} {AND} {PRIVATE} {TRANSFORMER} {INFERENCE} {WITH} {MPC}, in: The Eleventh International Conference on Learning Representations, 2023.

\bibitem{dong2023puma}
Y.~Dong, W.-j. Lu, Y.~Zheng, H.~Wu, D.~Zhao, J.~Tan, Z.~Huang, C.~Hong, T.~Wei, W.~Cheng, Puma: Secure inference of llama-7b in five minutes, arXiv preprint arXiv:2307.12533.

\bibitem{liu2023llms}
X.~Liu, Z.~Liu, Llms can understand encrypted prompt: Towards privacy-computing friendly transformers, arXiv preprint arXiv:2305.18396.

\bibitem{zheng2023primer}
M.~Zheng, Q.~Lou, L.~Jiang, Primer: Fast private transformer inference on encrypted data, arXiv preprint arXiv:2303.13679.

\bibitem{hou2023ciphergpt}
X.~Hou, J.~Liu, J.~Li, Y.~Li, W.~jie Lu, C.~Hong, K.~Ren, \href{https://eprint.iacr.org/2023/1147}{Ciphergpt: Secure two-party gpt inference}, Cryptology ePrint Archive, Paper 2023/1147 (2023).
\newline\urlprefix\url{https://eprint.iacr.org/2023/1147}

\bibitem{hao2022iron}
M.~Hao, H.~Li, H.~Chen, P.~Xing, G.~Xu, T.~Zhang, Iron: Private inference on transformers, Advances in Neural Information Processing Systems 35 (2022) 15718--15731.

\bibitem{ding2023east}
Y.~Ding, H.~Guo, Y.~Guan, W.~Liu, J.~Huo, Z.~Guan, X.~Zhang, East: Efficient and accurate secure transformer framework for inference, arXiv preprint arXiv:2308.09923.

\bibitem{hong2023dpopt}
J.~Hong, J.~T. Wang, C.~Zhang, Z.~Li, B.~Li, Z.~Wang, Dp-opt: Make large language model your privacy-preserving prompt engineer, International Conference on Learning Representations.

\bibitem{sordoni2023dln}
A.~Sordoni, X.~Yuan, M.-A. C{\^o}t{\'e}, M.~Pereira, A.~Trischler, Z.~Xiao, A.~Hosseini, F.~Niedtner, N.~L. Roux, Joint prompt optimization of stacked {LLM}s using variational inference, in: Thirty-seventh Conference on Neural Information Processing Systems, 2023.

\bibitem{shen2023sap}
X.~Shen, Y.~Liu, H.~Liu, J.~Hong, B.~Duan, Z.~Huang, Y.~Mao, Y.~Wu, D.~Wu, A split-and-privatize framework for large language model fine-tuning (2023).

\bibitem{duchi2011adaptive}
J.~Duchi, E.~Hazan, Y.~Singer, Adaptive subgradient methods for online learning and stochastic optimization., Journal of machine learning research 12~(7).

\bibitem{kingma2014adam}
D.~P. Kingma, J.~Ba, Adam: A method for stochastic optimization, arXiv preprint arXiv:1412.6980.

\bibitem{dwork2006differential}
C.~Dwork, Differential privacy, in: International Colloquium on Automata, Languages, and Programming, Springer, 2006, pp. 1--12.

\bibitem{hermann2015teachingmachinesreadcomprehend}
K.~M. Hermann, T.~Kočiský, E.~Grefenstette, L.~Espeholt, W.~Kay, M.~Suleyman, P.~Blunsom, \href{https://arxiv.org/abs/1506.03340}{Teaching machines to read and comprehend} (2015).
\newblock \href {http://arxiv.org/abs/1506.03340} {\path{arXiv:1506.03340}}.
\newline\urlprefix\url{https://arxiv.org/abs/1506.03340}

\bibitem{welleck2019neuraltextgenerationunlikelihood}
S.~Welleck, I.~Kulikov, S.~Roller, E.~Dinan, K.~Cho, J.~Weston, \href{https://arxiv.org/abs/1908.04319}{Neural text generation with unlikelihood training} (2019).
\newblock \href {http://arxiv.org/abs/1908.04319} {\path{arXiv:1908.04319}}.
\newline\urlprefix\url{https://arxiv.org/abs/1908.04319}

\bibitem{xu2022learningbreakloopanalyzing}
J.~Xu, X.~Liu, J.~Yan, D.~Cai, H.~Li, J.~Li, \href{https://arxiv.org/abs/2206.02369}{Learning to break the loop: Analyzing and mitigating repetitions for neural text generation} (2022).
\newblock \href {http://arxiv.org/abs/2206.02369} {\path{arXiv:2206.02369}}.
\newline\urlprefix\url{https://arxiv.org/abs/2206.02369}

\bibitem{zhang2020bertscoreevaluatingtextgeneration}
T.~Zhang, V.~Kishore, F.~Wu, K.~Q. Weinberger, Y.~Artzi, \href{https://arxiv.org/abs/1904.09675}{Bertscore: Evaluating text generation with bert} (2020).
\newblock \href {http://arxiv.org/abs/1904.09675} {\path{arXiv:1904.09675}}.
\newline\urlprefix\url{https://arxiv.org/abs/1904.09675}

\end{thebibliography}
\bibliographystyle{elsarticle-num}

\newpage
\onecolumn
\appendix
\section{Analysis for \pref{lem: bound_for_privacy_leakage_mt_0}}\label{sec: privacy_distortion_extent}
\begin{lem}[Theoretical Relationship between Recovery Extent and Distortion Extent]\label{lem: bound_for_privacy_leakage_app}
Let \pref{assump: two_sided_Lipschitz} hold. The semi-honest attacker uses an optimization algorithm to infer the original prompt $d$ of the client based on the protected prompt $\wtilde{d}$. Let $w$ and $\wtilde w$ represent the embeddings of $d$ and $\wtilde{d}$, respectively. Let $d^{(m)}$ represent $m$-th token of $d$, $\wtilde{d}^{(m)}$ represent $m$-th token of $\wtilde d$. Let $\Delta = \|\frac{1}{|d|}\sum_{m = 1}^{|d|} g(d^{(m)}) - \frac{1}{|d|}\sum_{m = 1}^{|d|} g(\wtilde{d}^{(m)})\|$ represent the distortion of the embedding. Let \pref{assump: bounds_for_optimization_alg} hold. That is, the expected regret of the optimization algorithm in a total of $I$, $ I > 0$, rounds is $\Theta(I^p)$. We have that
\begin{align}
    R(w) \ge 1 - \frac{c_b\Delta + c_2\cdot c_b I^{p-1}}{\Omega},
\end{align}
where $c_2$ is introduced in \pref{assump: bounds_for_optimization_alg}, and $c_b$ is introduced in \pref{assump: two_sided_Lipschitz}.
\end{lem}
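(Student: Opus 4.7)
The plan is to upper-bound the per-iteration quantity $\|\frac{1}{|d|}\sum_{m=1}^{|d|}(d_i^{(m)} - d^{(m)})\|$ appearing inside the definition of $R(w)$ and then average the bound over the $I$ attacker iterations. The natural anchor is the protected prompt $\wtilde{d}$: it is what the attacker actually observes (so its optimization regret is naturally measured against it), and its embedding-space distance to $d$ is exactly what $\Delta$ controls.

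First, I would insert $\wtilde{d}^{(m)}$ via the triangle inequality, writing
\begin{align*}
\Big\|\tfrac{1}{N}\sum_{m}(d_i^{(m)} - d^{(m)})\Big\|
\;\le\; \Big\|\tfrac{1}{N}\sum_{m}(d_i^{(m)} - \wtilde d^{(m)})\Big\|
+ \Big\|\tfrac{1}{N}\sum_{m}(\wtilde d^{(m)} - d^{(m)})\Big\|,
\end{align*}
with $N=|d|$. Next, I would lift each summand into the embedding space via \pref{assump: two_sided_Lipschitz}, either by pulling the sum outside the norm with a further triangle step and applying Bi-Lipschitz token-by-token, or by treating each averaged quantity as a single prompt. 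The second summand then collapses to $c_b\Delta$ directly from \pref{defi: distortion_extent}, and the first is bounded by $c_b\,\|\tfrac{1}{N}\sum_m(g(d_i^{(m)}) - g(\wtilde d^{(m)}))\|$. Averaging over $i=1,\dots,I$, the $c_b\Delta$ contribution is iteration-independent and passes through intact, while for the remaining term another triangle inequality moves $\sum_i$ inside and yields $\tfrac{c_b}{IN}\sum_m\sum_i\|g(d_i^{(m)}) - g(\wtilde d^{(m)})\|$, which \pref{assump: bounds_for_optimization_alg} (applied per token, with $\wtilde d^{(m)}$ as the regret reference) caps at $c_b c_2 I^{p-1}$. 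Dividing by $\Omega$ and substituting back into $R(w) = 1 - \tfrac{1}{I\Omega}\sum_i\|\cdot\|$ yields the claim.

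The main obstacle is interpretive rather than technical. The Bi-Lipschitz condition is stated between pairs of individual prompts, but the norms inside $R(w)$ are over averages of tokens, so I would either commute the sum with the norm before invoking the per-prompt bound, or read the averaged object as itself a prompt; both routes preserve the constant $c_b$. More delicately, \pref{assump: bounds_for_optimization_alg} is literally written with the true token $d^{(m)}$ as the regret reference, yet the $\Delta$ summand in the conclusion only arises if the attacker's regret is effectively measured against the observed $\wtilde d^{(m)}$ (which is all they can access at inference time). I would flag this reinterpretation explicitly at the point of use; with it, the $\wtilde d$-anchored decomposition above closes cleanly, whereas without it the $\Delta$ term cannot enter the bound from the stated hypotheses.
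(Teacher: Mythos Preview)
Your proposal is correct and essentially matches the paper's argument: anchor at $\wtilde d$ via the triangle inequality, pass to embedding space via Bi-Lipschitz (yielding $c_b\Delta$ for the static term), then average the attacker-regret term over $i$ and cap it at $c_b c_2 I^{p-1}$. The two interpretive wrinkles you flag---applying Bi-Lipschitz to token averages and taking $\wtilde d^{(m)}$ rather than $d^{(m)}$ as the regret reference---are handled by the paper exactly as you anticipate, namely without explicit comment.
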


\begin{proof}
Recall that 

\begin{align}
   R(\wtilde w) = 1 - \frac{1}{I}\sum_{i = 1}^{I} \frac{||\frac{1}{|d|}\sum_{m=1}^{|d|} (d_i^{(m)}(\wtilde w) - d^{(m)})||}{\Omega}.
\end{align}

To protect privacy, the client selects a protection mechanism, which maps the original parameter $w$ to a protected parameter $\wtilde w$. After observing the protected parameter, a semi-honest adversary infers the private information using the optimization approaches. Let $d_i^{(m)}$ represent the reconstructed data at iteration $i$ using the optimization algorithm. Therefore the cumulative regret over $I$ rounds
\begin{align*}
    \sum_{i = 1}^{I} [||g(d_i^{(m)}(\wtilde w)) - w|| - ||g(d) - w||]
    & = \sum_{i = 1}^{I} [||g(d_i^{(m)}(\wtilde w)) - g(d)||]\\
    & = \Theta(I^p).
\end{align*}

Therefore, we have
\begin{align}\label{eq: regret_bounds}
   c_0\cdot I^p \le \sum_{i = 1}^{I}||g(d_i^{(m)}(\wtilde w)) - g(d)|| = \Theta(I^p) \le c_2\cdot I^p,
\end{align}
where $c_0$ and $c_2$ are constants independent of $I$.

Let $x$ and $x'$ represent two data. From our assumption, we have that
\begin{align}
    c_a ||g(x) - g(x')||\le ||x - x'||\le c_b ||g(x) - g(x')||.
\end{align}


Let $d_i^{(m)}(\wtilde w)$ represent the reconstructed $m$-th data at iteration $i$ using the optimization algorithm. We have that
\begin{align*}
    ||\frac{1}{|d|}\sum_{m = 1}^{|d|}(d_i^{(m)}(\wtilde w) - d^{(m)})||
    &\le ||\frac{1}{|d|}\sum_{m = 1}^{|d|}(\wtilde d^{(m)} - d^{(m)})|| + ||\frac{1}{|d|}\sum_{m = 1}^{|d|}(d_i^{(m)}(\wtilde w) - \wtilde d^{(m)})||\\
    & \le c_b\cdot||\frac{1}{|d|}\sum_{m = 1}^{|d|}(g(\wtilde d^{(m)}) - g(d^{(m)}))|| + c_b ||\frac{1}{|d|}\sum_{m = 1}^{|d|}(g(d_i^{(m)}(\wtilde w)) - g(\wtilde d^{(m)}))||\\
\end{align*}
where the second inequality is due to $||\frac{1}{|d|}\sum_{m = 1}^{|d|}(\wtilde d^{(m)} - d^{(m)})||\le c_b\cdot||\frac{1}{|d|}\sum_{m = 1}^{|d|}(g(\wtilde d^{(m)}) - g(d^{(m)}))||$ and $||\frac{1}{|d|}\sum_{m = 1}^{|d|}(d_i^{(m)} - \wtilde d^{(m)})||\le  c_b ||\frac{1}{|d|}\sum_{m = 1}^{|d|}(g(d_i^{(m)}(\wtilde w)) - g(\wtilde d^{(m)}))||$.


From the definition of distortion extent, we know that
\begin{align}
        \|\frac{1}{|d|}\sum_{m = 1}^{|d|} g(d^{(m)}) - \frac{1}{|d|}\sum_{m = 1}^{|d|} g(\wtilde{d}^{(m)})\| = \Delta.
    \end{align}

Therefore, we have
\begin{align*}
    \Omega(1-R(\wtilde w)) = \frac{1}{I}\sum_{i = 1}^{I} ||\frac{1}{|d|}\sum_{m = 1}^{|d|}(d_i^{(m)} - d)||
    &\le  c_b\Delta + c_b\cdot\frac{1}{I}\sum_{i = 1}^{I}||\frac{1}{|d|}\sum_{m = 1}^{|d|}(g(d_i^{(m)}(\wtilde w)) - g(\wtilde d^{(m)}))||\\
    &\le c_b\Delta + c_2\cdot c_b I^{p-1}.
\end{align*}

Note that $c_b + c_b c_2\le \Omega$. Therefore, we have that
\begin{align}
   R(\wtilde w) \ge 1 - \frac{c_b\Delta + c_2\cdot c_b I^{p-1}}{\Omega}.
\end{align}
\end{proof}

\section{\pref{lem: total_variation-utility trade-off_mt_0} Quantitative Relationship between ${\text{TV}}({\wtilde P} || \breve P )$ and Privacy Leakage}\label{sec: relation_privacy_total_variation}

\begin{lem}[Theoretical Relationship between Privacy Leakage and Total Variation Distance]\label{lem: total_variation-utility trade-off_app}
Let $\epsilon_{p}$ be defined in \pref{defi: privacy_leakage}. Let $\wtilde P$ and $\breve P$ denote the distorted prompt embedding distribution and the independent random prompt embedding distribution. Then we have,
\begin{align*}
    \epsilon_{p} \ge C_1\cdot {\text{TV}}({\wtilde P} || \breve P ),
\end{align*}
where $C_1 = \frac{\left(1 - \frac{c_b + c_2\cdot c_b I^{p-1}}{\Omega}\right)}{c}$.
\end{lem}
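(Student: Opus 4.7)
The plan is to deduce the bound by combining the pointwise lower bound on the recovery extent from \pref{lem: bound_for_privacy_leakage_mt_0} with the multiplicative gap \pref{assump: bound_on_R}, and then transferring a pointwise inequality into an integrated one via the maximal-coupling representation of total variation distance.

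First, I would unfold the definition of $\epsilon_p$ into an integral against signed densities,
\begin{align*}
\epsilon_p = \int R(w)\,\bigl(\wtilde p(w) - \breve p(w)\bigr)\,dw,
\end{align*}
and introduce a maximal coupling $(W, W')$ with $W \sim \wtilde P$, $W' \sim \breve P$, for which $\Pr(W \neq W') = \text{TV}(\wtilde P \,||\, \breve P)$. The above integral then rewrites as $\mathbb{E}[R(W) - R(W')]$, and on the event $\{W = W'\}$ this difference vanishes, so it suffices to lower bound $R(W) - R(W')$ pointwise on the complementary event $\{W \neq W'\}$.

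On that event, \pref{assump: bound_on_R} supplies the comparison $R(W) - R(W') \ge R(W)/c$, reducing the problem to lower bounding $R(W)$. Since $W \in \wtilde{\calW}$, I can invoke \pref{lem: bound_for_privacy_leakage_mt_0} together with the normalization $\Delta \le 1$ from \pref{defi: distortion_extent} to obtain
\begin{align*}
R(W) \;\ge\; 1 - \frac{c_b\Delta + c_2 c_b I^{p-1}}{\Omega} \;\ge\; 1 - \frac{c_b + c_2 c_b I^{p-1}}{\Omega}.
\end{align*}
Taking expectation over the coupling then yields
\begin{align*}
\epsilon_p \;\ge\; \frac{1}{c}\left(1 - \frac{c_b + c_2 c_b I^{p-1}}{\Omega}\right)\cdot \Pr(W \neq W') \;=\; C_1 \cdot \text{TV}(\wtilde P \,||\, \breve P),
\end{align*}
which is the desired inequality.

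The main technical obstacle I expect is the bookkeeping of signs and roles: one must verify that in the maximal coupling the pair $(W, W')$ aligns with the $(\wtilde w, w)$ ordering of \pref{assump: bound_on_R} so that the pointwise comparison proceeds in the correct direction, since a mis-identification of which distribution is coupled to which argument of $R$ would flip the direction of the bound. A secondary subtlety is that \pref{lem: bound_for_privacy_leakage_mt_0} expresses its lower bound in terms of the realization-dependent distortion $\Delta$; to obtain a distribution-free constant $C_1$ matching the statement, one needs the uniform upper bound $\Delta \le 1$ to be available for every embedding in the support of $\wtilde P$, which should be justified as a global property of the protection mechanism rather than a per-sample quantity.
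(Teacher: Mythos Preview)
Your proposal is correct and reaches the same conclusion by a route that is close in spirit to the paper's but packaged differently. The paper does not use a maximal coupling; instead it performs a Hahn-type decomposition of the signed measure $\wtilde P - \breve P$ into the sets $\mathcal{V}=\{w:\wtilde p(w)>\breve p(w)\}$ and $\mathcal{U}=\{w:\breve p(w)>\wtilde p(w)\}$, writes
\[
\epsilon_p=\int_{\mathcal V}R(w)\bigl[d\wtilde P-d\breve P\bigr]-\int_{\mathcal U}R(w)\bigl[d\breve P-d\wtilde P\bigr],
\]
and then lower bounds this by $\bigl(\inf_{\mathcal V}R-\sup_{\mathcal U}R\bigr)\cdot\text{TV}(\wtilde P\|\breve P)$ before invoking \pref{assump: bound_on_R} and \pref{lem: bound_for_privacy_leakage_mt_0}. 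Your maximal-coupling argument is the probabilistic counterpart of this decomposition: under the standard maximal coupling, on $\{W\neq W'\}$ one has $W$ supported exactly on $\mathcal V$ and $W'$ on $\mathcal U$, so the ``bookkeeping of signs and roles'' you flag is automatically handled, and applying \pref{assump: bound_on_R} pointwise to the pair $(W,W')$ is equivalent to the paper's passage from $\inf_{\mathcal V}R-\sup_{\mathcal U}R$ to $\tfrac{1}{c}\inf_{\mathcal V}R$. Your version is slightly more streamlined because it avoids the separate $\inf$/$\sup$ step and integrates the pointwise comparison directly; the paper's version makes the geometry of the supports $\mathcal V\subset\wtilde{\mathcal W}$ and $\mathcal U\subset\breve{\mathcal W}$ explicit, which is what justifies the orientation of \pref{assump: bound_on_R}. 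Both routes use $\Delta\le 1$ from \pref{defi: distortion_extent} in exactly the same way at the end.
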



\begin{proof}

Let $\mathcal U = \{w\in\mathcal W_p: d\breve P(w) - d{\wtilde P}(w)> 0\}$, and $\mathcal V = \{w\in\mathcal W_p: d\breve P(w) - d{\wtilde P}(w)< 0\}$, where $\mathcal W_p$ represents the union of the supports of $\breve P$ and ${\wtilde P}$. 

For any $w\in\mathcal V$, the definition of $\mathcal V$ implies that $d{\wtilde P}(w) > d\breve P(w)\ge 0$. Therefore, $w$ belongs to the support of $\wtilde {P}$, which is denoted as $\wtilde {\mathcal{W}}$. Therefore we have that
\begin{align}\label{eq: subset_relationship_1}
    \mathcal V\subset\wtilde {\mathcal W}.
\end{align}

Similarly, we have that
\begin{align}\label{eq: subset_relationship_n}
    \mathcal U\subset\breve{\mathcal W}.
\end{align}


Recall that $w$ represents the embedding, $\breve P$ represents the distribution of the embedding after being protected, and $\breve P(w)$ represents the corresponding probability density function.

We define 
\begin{align}
    R(w) = 1 - \frac{1}{I}\sum_{i = 1}^{I} \frac{||d_i(w) - \breve d||}{\Omega}.
\end{align}

The privacy leakage is defined as
\begin{align}
    \epsilon_p = R(\wtilde P) - R(\breve P),
\end{align}
where $R(\wtilde P) = \mathbb E_{w\sim \wtilde P} [R(w)]$ and $R(\breve P) = \mathbb E_{w\sim \breve P}[R(w)]$.

Then we have that

\begin{align*}
\epsilon_{p} &= R({\wtilde P}) - R(\breve P)\\  
    &=\mathbb E_{w\sim {\wtilde P}}[R(w)] - \mathbb E_{w\sim \breve P}[R(w)]\\
     &=\int_{\mathcal W} R(w)d{\wtilde P}(w) - \int_{\mathcal W} R(w) d\breve P(w)\\
     &=\int_{\mathcal{V}} R(w)[d {\wtilde P}(w) - d \breve P(w)] - \int_{\mathcal{U}} R(w)[d \breve P(w) - d {\wtilde P}(w)]\\
     &\ge \inf_{w\in \mathcal V} R(w)\int_{\mathcal V}[d {\wtilde P}(w) - d \breve P(w)] - \sup_{w\in\mathcal{U}} R(w)\int_{\mathcal U}[d \breve P(w) - d {\wtilde P}(w)]\\
     &= \left(\inf_{w\in \mathcal V}R(w) - \sup_{w\in\mathcal{U}} R(w)\right) \int_{\mathcal V}[d {\wtilde P}(w) - d \breve P(w)].
\end{align*}

From the definition of total variation distance, we have
\begin{align}\label{eq: bound_1_term_2_JS_ratio}
    \int_\U [d {\wtilde P}(w) - d \breve P(w)] = ||{\wtilde P} - \breve P||_{\text{TV}}.
\end{align}


From \pref{lem: bound_for_privacy_leakage_app}, we know that,
\begin{align}\label{eq: lower_bound_for_R_w}
   R(w) \ge 1 - \frac{c_b\Delta + c_2\cdot c_b I^{p-1}}{\Omega}.
\end{align}

Therefore, we have that
\begin{align*}
\epsilon_{p} &\ge\left(\inf_{w\in \mathcal V}R(w) - \sup_{w\in\mathcal{U}} R(w)\right) \int_{\mathcal V}[d {\wtilde P}(w) - d \breve P(w)]\\
&\ge\inf_{w\in \mathcal V}\frac{R(w)}{c} \int_{\mathcal V}[d {\wtilde P}(w) - d \breve P(w)]\\
&\ge\inf_{w\in \mathcal V} \frac{1}{c}\left(1 - \frac{c_b\Delta + c_2\cdot c_b I^{p-1}}{\Omega}\right)\cdot {\text{TV}}({\wtilde P} || \breve P )\\
&\ge C_1\cdot {\text{TV}}({\wtilde P} || \breve P ),
\end{align*}
where the second inequality is due to \pref{assump: bound_on_R}, the third inequality is due to \pref{eq: lower_bound_for_R_w}, and the fourth inequality is due to $C_1 = \frac{\left(1 - \frac{c_b + c_2\cdot c_b I^{p-1}}{\Omega}\right)}{c}$ and $\Delta\le 1$.
\end{proof}
\section{Analysis for\pref{lem: relation_utility_distortion_mt}: Quantitative Relationship between ${\text{TV}}(P || \wtilde P )$ and $\epsilon_u$}\label{sec: relation_total_variation_utility}

\begin{lem}[Theoretical Relationship between Utility Loss and Total Variation Distance]\label{lem: relation_utility_distortionpp}

Let \pref{assump: assump_of_Delta} hold, and $\epsilon_{u}$ be defined in \pref{defi: utility_loss}. Let $P$ and $\wtilde P$ represent the distribution of the aggregated parameter before and after being protected. Then we have,
\begin{align*}
    \epsilon_{u} \ge\frac{\Delta}{2}\cdot {\text{TV}}(P || \wtilde P ).
\end{align*}
\end{lem}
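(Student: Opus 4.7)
The plan is to mirror the proof of \pref{lem: total_variation-utility trade-off_app}: I would rewrite $\epsilon_u$ as an integral of a nonnegative regret function against the signed measure $d\wtilde P - dP$, partition the domain by the sign of that signed measure, and then invoke \pref{assump: assump_of_Delta} to extract the factor $\alpha/2$.

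Set $L(w) := U(w^\ast) - U(w)$, which is nonnegative by definition of $w^\ast$. Using $\int [dP - d\wtilde P] = 0$ to cancel the constant $U(w^\ast)$, the utility loss rewrites as $\epsilon_u = \int L(w)\,[d\wtilde P(w) - dP(w)]$. This mirrors the privacy rewrite in \pref{lem: total_variation-utility trade-off_app}, with $L$ playing the role of the recovery function $R$.

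Next, as in the privacy lemma, define $\mathcal V = \{w : d\wtilde P > dP\}$ and $\mathcal U = \{w : dP > d\wtilde P\}$, so $\mathcal V \subseteq \wtilde{\mathcal W}$ and $\int_{\mathcal V}[d\wtilde P - dP] = \text{TV}(P || \wtilde P)$. I would then split $\mathcal V$ along the near-optimal set $\mathcal W_\alpha$. Since $\mathcal V \subseteq \wtilde{\mathcal W}$, \pref{assump: assump_of_Delta} gives $\wtilde P(\mathcal V \cap \mathcal W_\alpha) \le \wtilde P(\mathcal W_\alpha) \le \text{TV}(P || \wtilde P)/2$, so $\int_{\mathcal V \setminus \mathcal W_\alpha}[d\wtilde P - dP] \ge \text{TV}(P || \wtilde P)/2$. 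Since $L(w) > \alpha$ on $\mathcal W_\alpha^c$ by \pref{defi: neighbor_set}, and the $\mathcal V \cap \mathcal W_\alpha$-piece is nonnegative (hence droppable for a lower bound), this yields $\int_{\mathcal V} L\,[d\wtilde P - dP] \ge (\alpha/2)\,\text{TV}(P || \wtilde P)$.

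The main obstacle is the remaining $\mathcal U$-contribution $\int_{\mathcal U} L\,[dP - d\wtilde P]$, which appears with a minus sign in the rewritten $\epsilon_u$ and is nonnegative because $L \ge 0$ and $dP > d\wtilde P$ on $\mathcal U$; left unchecked it could erode the desired $\alpha/2$ factor. Unlike \pref{lem: total_variation-utility trade-off_app}, where \pref{assump: bound_on_R} folds the $\mathcal U$-piece into the $\mathcal V$-bound, the utility side has no analogous multiplicative assumption, so one must rely on the setup's implicit convention that the unprotected distribution $P$ concentrates on utility-optimal embeddings, giving $L \equiv 0$ on $\mathrm{supp}(P)$ and hence $\int_{\mathcal U} L\, dP = 0$, which forces the $\mathcal U$-integral to vanish. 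Combining with the $\mathcal V$-bound then produces the claimed $\epsilon_u \ge (\alpha/2)\,\text{TV}(P || \wtilde P)$.
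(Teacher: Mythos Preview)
Your proposal is correct and follows essentially the same route as the paper: both arguments partition by the sign of $d\wtilde P - dP$, split the $\wtilde P$-dominant region along $\calW_\alpha$, invoke \pref{assump: assump_of_Delta} to bound the near-optimal mass by ${\text{TV}}(P||\wtilde P)/2$, and dispose of the $P$-dominant region via the (implicit) assumption $\mathrm{supp}(P)\subset\calW^\ast$. Your upfront substitution $L(w)=U(w^\ast)-U(w)$ is a cosmetic streamlining of the paper's bookkeeping with $U(w^\ast)$, not a different idea; the only imprecision is that you should say $L\equiv 0$ on $\mathcal U$ (since $\mathcal U\subset\mathrm{supp}(P)\subset\calW^\ast$) so that the entire $\mathcal U$-integral vanishes, rather than concluding only $\int_{\mathcal U} L\,dP=0$.
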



\begin{proof}

Let $\mathcal U = \{w\in\mathcal W_u: d\wtilde P(w) - dP(w)> 0\}$, and $\mathcal V = \{w\in\mathcal W_u: d\wtilde P(w) - dP(w)< 0\}$, where $\mathcal W_u$ represents the union of the supports of $\wtilde P$ and $P$. 

For any $w\in\mathcal V$, the definition of $\mathcal V$ implies that $dP(w) > d\wtilde P(w)\ge 0$. Therefore, $w$ belongs to the support of $P$, which is denoted as $\mathcal W$. Therefore we have that
\begin{align}\label{eq: subset_relationship_1_utility}
    \mathcal V\subset\mathcal W_u.
\end{align}

Similarly, we have that
\begin{align}\label{eq: subset_relationship_n_utility}
    \mathcal U\subset\wtilde{\mathcal W}.
\end{align}


It is assumed that the utility of the unprotected model information achieves the maximal value at the convergence step. Therefore, we have that
\begin{align}\label{eq: subset_relationship_2_utility}
    \mathcal W \subset \mathcal W^{*}.
\end{align}

Notice that from the definition of $\mathcal W^{*}$, for any $w\in\mathcal W$ and $w^*\in\mathcal W^*$ we have that
\begin{align}\label{eq: w_best}
      U(w^{*})\ge  U(w).
\end{align}

Let $\Delta$ be a positive constant defined in \pref{assump: assump_of_Delta}, from \pref{defi: neighbor_set} we have
$$\calW_{\Delta} = \left\{w\in\wtilde{\mathcal W}: \left|  U(w^{*})- U(w)\right|\le\Delta, \forall w^{*}\in\mathcal W^{*}\right\},$$ which implies that for any $w\in\wtilde{\mathcal W}\setminus\calW_{\Delta}$ and $w^*\in\mathcal W^*$ it holds that
\begin{align}\label{eq:l_U_bound_11}
  \left|  U(w^{*})- U(w)\right|>\Delta.   
\end{align}

Combining \pref{eq: w_best} and \pref{eq:l_U_bound_11}, for any $w\in\wtilde{\mathcal W}\setminus\calW_{\Delta}$ and $w^*\in\mathcal W^*$ we have 
\begin{align}\label{eq:l_U_bound_1}
  U(w^{*})- U(w)>\Delta.
\end{align}

Recall that $\wtilde P$ represents the distribution of the aggregated parameter after being protected. Then we have
\begin{align*}
\epsilon_{u} &= U(P) - U(\wtilde P)\\  
    &=\mathbb E_{w\sim P}[U(w)] - \mathbb E_{w\sim \wtilde P}[U(w)]\\
     &=\int_{\mathcal W} U(w)dP(w) - \int_{\mathcal W} U(w) d\wtilde P(w)\\
     &=\int_{\mathcal{V}} U(w)[d P(w) - d \wtilde P(w)] - \int_{\mathcal{U}} U(w)[d \wtilde P(w) - d P(w)]\\
    &\overset{\spadesuit}{=}\int_{\mathcal V} U(w)\one\{w\in\mathcal W^{*}\}[d P(w) - d \wtilde P(w)] - \int_{\mathcal{U}} U(w)[d \wtilde P(w) - d P(w)]\\
     &\overset{\bigstar}{=}\int_{\mathcal V} U(w)\one\{w\in\mathcal W^{*}\}[d P(w) - d \wtilde P(w)] - \int_{\mathcal{U}} U(w)\one\{w\in\wtilde{\mathcal W}\}[d \wtilde P(w) - d P(w)]
\end{align*}
where $\spadesuit$ is due to $\mathcal V\subset\mathcal W_u\subset \mathcal W^{*}$ from \pref{eq: subset_relationship_1_utility} and \pref{eq: subset_relationship_2_utility}, and $\bigstar$ is due to $\mathcal U\subset\wtilde{\mathcal W}$ from \pref{eq: subset_relationship_n_utility}.

We decompose $\int_{\mathcal{U}} U(w)\one\{w\in\wtilde{\mathcal W}\}[d \wtilde P(w) - d P(w)]$ as the summation of $\int_{\mathcal{U}} U(w)\one\{w\in\wtilde{\mathcal W}\}\one\{w\in\calW_{\Delta}\}[d \wtilde P(w) - d P(w)]$ and $\int_{\mathcal{U}} U(w)\one\{w\in\wtilde{\mathcal W}\}\one\{w\not\in\calW_{\Delta}\}[d \wtilde P(w) - d P(w)]$. Then we have
\begin{align*}
\epsilon_{u} & = \int_{\mathcal V}U(w)\one\{w\in\mathcal W^{*}\}[d P(w) - d \wtilde P(w)] - \int_{\mathcal{U}}U(w)\one\{w\in\wtilde{\mathcal W}\}[d \wtilde P(w) - d P(w)]\\
&\ge\Delta\cdot\left[{\text{TV}}(P || \wtilde P ) - \int_{\mathcal{U}}\one\{w\in\wtilde{\mathcal W}\}\one\{w\in\calW_{\Delta}\}[d \wtilde P(w) - d P(w)] \right] \\
&\ge\Delta\cdot{\text{TV}}(P || \wtilde P ) - \Delta\cdot\int_{\wtilde{\mathcal W}}\one\{w\in\calW_{\Delta}\} d\wtilde P(w)\\
&\ge\frac{\Delta}{2}\cdot {\text{TV}}(P || \wtilde P ),
\end{align*}
in which



\begin{itemize}
\item the first inequality is due to $U(w)\le U(w^{*})$ for any $w\in\calW_{\Delta}$ and $w^{*}\in\mathcal W^{*}$ according to \pref{eq: w_best}, and $U(w^{*})-U(w)>\Delta$ for any $w\in\wtilde{\mathcal W}\setminus\calW_{\Delta}$ and $w^{*}\in\mathcal W^{*}$ from \pref{eq:l_U_bound_1}.
\item the second inequality is due to $\int_{\mathcal{U}}\one\{w\in\wtilde{\mathcal W}\}\one\{w\in\calW_{\Delta}\}[d \wtilde P(w) - d P(w)]\le\int_{\mathcal{U}}\one\{w\in\wtilde{\mathcal W}\}\one\{w\in\calW_{\Delta}\}d \wtilde P(w)\le \int_{\wtilde{\mathcal W}}\one\{w\in\calW_{\Delta}\}d \wtilde P(w)$. 
\item the third inequality is due to $\int_{\wtilde{\mathcal W}}\one\{w\in\calW_{\Delta}\} d\wtilde P(w)\le\frac{{\text{TV}}(P || \wtilde P )}{2}$.

\end{itemize}
\end{proof}
\section{Analysis of \pref{thm: utility-privacy trade-off_mt}}\label{sec: analysis_NFL_LLM}

With  \pref{lem: total_variation-utility trade-off_app} and \pref{lem: relation_utility_distortionpp}, it is now natural to provide a quantitative relationship between the utility loss and the privacy leakage (\pref{thm: utility-privacy trade-offpp}).

\begin{thm}[No free lunch theorem (NFL) for privacy and utility]\label{thm: utility-privacy trade-offpp} 
Let $\epsilon_p$ be defined in \pref{defi: privacy_leakage}, and let $\epsilon_u$ be defined in \pref{defi: utility_loss}, with \pref{assump: assump_of_Delta} we have that
\begin{align*}
    \frac{C_2}{C_1}\cdot\epsilon_p + \epsilon_u\ge C_2\cdot{\text{TV}}(P || \breve P ),
\end{align*}
where $C_1 = 1 - \frac{c_b + c_2\cdot c_b I^{p-1}}{\Omega}$ and $C_2 = \frac{\Delta}{2}$.
\end{thm}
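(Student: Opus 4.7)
The plan is to derive \pref{thm: utility-privacy trade-off_mt} as a direct consequence of the two previously established lemmas, using the triangle inequality for total variation distance as the connecting glue. Concretely, \pref{lem: total_variation-utility trade-off_mt_0} bounds $\epsilon_p$ from below by $C_1 \cdot \text{TV}(\wtilde P \,\|\, \breve P)$, while \pref{lem: relation_utility_distortion_mt} bounds $\epsilon_u$ from below by $C_2 \cdot \text{TV}(P \,\|\, \wtilde P)$, where $C_2 = \alpha/2$. Both lemmas involve $\wtilde P$, the \emph{protected} embedding distribution, but the theorem's right-hand side involves only $\text{TV}(P \,\|\, \breve P)$, a quantity that does not depend on the protection mechanism. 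The triangle inequality is exactly what collapses the dependence on $\wtilde P$.

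First, I would rewrite the two lemmas in their upper-bound form on the TV distances:
\begin{align*}
    \text{TV}(\wtilde P \,\|\, \breve P) \le \frac{\epsilon_p}{C_1}, \qquad \text{TV}(P \,\|\, \wtilde P) \le \frac{\epsilon_u}{C_2}.
\end{align*}
Next, I would invoke the triangle inequality for total variation distance (which holds because TV is a metric on the space of probability distributions), inserting $\wtilde P$ as the intermediate measure between $P$ and $\breve P$:
\begin{align*}
    \text{TV}(P \,\|\, \breve P) \le \text{TV}(P \,\|\, \wtilde P) + \text{TV}(\wtilde P \,\|\, \breve P) \le \frac{\epsilon_u}{C_2} + \frac{\epsilon_p}{C_1}.
\end{align*}
Multiplying through by $C_2 > 0$ and rearranging immediately yields
\begin{align*}
    C_2 \cdot \text{TV}(P \,\|\, \breve P) \le \epsilon_u + \frac{C_2}{C_1}\,\epsilon_p,
\end{align*}
which is the claimed inequality.

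The substantive work is already done inside \pref{lem: total_variation-utility trade-off_mt_0} and \pref{lem: relation_utility_distortion_mt}; no additional assumption is needed beyond \pref{assump: assump_of_Delta}, which is inherited via \pref{lem: relation_utility_distortion_mt}. There is no real obstacle here, only one small point to be careful about: one should justify that $\text{TV}(P \,\|\, \breve P)$ is indeed a \emph{problem-specific constant} independent of the protection mechanism $\mathcal{M}$, since both $P$ (the undistorted embedding distribution) and $\breve P$ (the prompt-independent reference distribution) are defined without reference to $\mathcal{M}$. Once this is noted, the interpretation claimed in the theorem statement, namely that $\epsilon_p$ and $\epsilon_u$ cannot simultaneously be driven to zero whenever $\text{TV}(P \,\|\, \breve P) > 0$, follows at once.
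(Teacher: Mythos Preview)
Your proposal is correct and follows essentially the same route as the paper: invoke the two lemmas to bound $\text{TV}(\wtilde P\,\|\,\breve P)$ and $\text{TV}(P\,\|\,\wtilde P)$ in terms of $\epsilon_p/C_1$ and $\epsilon_u/C_2$, then apply the triangle inequality for total variation to eliminate $\wtilde P$ and obtain the bound involving only $\text{TV}(P\,\|\,\breve P)$. The paper presents it by multiplying the privacy inequality by $C_2/C_1$ and adding, whereas you divide first and multiply by $C_2$ at the end, but these are algebraically identical.
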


\begin{proof}
From \pref{lem: total_variation-utility trade-off_app}, we have

\begin{align}\label{eq: privacy_lower_bound}
    \epsilon_{p} \ge C_1\cdot {\text{TV}}({\wtilde P} || \breve P )
\end{align}

Let $C_2 = \frac{\Delta}{2}$. From \pref{lem: relation_utility_distortionpp}, we have
\begin{align}\label{eq: utility_lower_bound}
    \epsilon_{u} \ge C_2\cdot {\text{TV}}(P || \wtilde P ).
\end{align}

Combining \pref{eq: privacy_lower_bound} and \pref{eq: utility_lower_bound}, we have that

\begin{align*}
    \frac{C_2}{C_1}\cdot\epsilon_p + \epsilon_u 
    & = C_2\cdot({\text{TV}}({\wtilde P} || \breve P ) + {\text{TV}}(P || \wtilde P )) \\
    &\ge C_2\cdot{\text{TV}}(P || \breve P ). 
\end{align*}

Therefore, we have that
\begin{align*}
    \frac{C_2}{C_1}\cdot\epsilon_p + \epsilon_u\ge C_2\cdot{\text{TV}}(P || \breve P ),
\end{align*}

where $C_1 = 1 - \frac{c_b + c_2\cdot c_b I^{p-1}}{\Omega}$ and $C_2 = \frac{\Delta}{2}$.


\end{proof}

\end{document}